\documentclass[hidelinks,onefignum,onetabnum,final]{siamart251216}

\usepackage{amsfonts}
\usepackage{graphicx}
\usepackage{epstopdf}
\usepackage{algorithmic}
\ifpdf
  \DeclareGraphicsExtensions{.eps,.pdf,.png,.jpg}
\else
  \DeclareGraphicsExtensions{.eps}
\fi

\newsiamremark{remark}{Remark}
\newsiamremark{hypothesis}{Hypothesis}
\crefname{hypothesis}{Hypothesis}{Hypotheses}
\newsiamthm{claim}{Claim}
\newsiamremark{fact}{Fact}
\crefname{fact}{Fact}{Facts}

\headers{Polarization-Induced Beam Bending}{H. Antil, R. L\"ohner, and S. Shah}

\title{Polarization-Induced Beam Bending:\\
Mathematical Model, Discretization, and Algorithm\thanks{Submitted to the editors DATE.
\funding{This work is also partially supported by the Office of Naval Research (ONR) under Award NO: N00014-24-1-2147,
NSF grant DMS-2408877, the Air Force Office of Scientific Research (AFOSR) under
Award NO: FA9550-25-1-0231, and SURE-AI Centre grant 357482, Research Council of Norway.}}}

\author{Harbir Antil\thanks{Center for Mathematics and Artificial Intelligence and Department of Mathematical Sciences, George Mason University,  Fairfax, Virginia 22030
  (\email{hantil@gmu.edu}, \email{sshah66@gmu.edu}).}
\and Rainald L\"ohner\thanks{Center for Computational Fluid Dynamics and Department of Physics and Astronomy, George Mason University, Fairfax, Virginia 22030. (\email{rlohner@gmu.edu})} 
\and Sarswati Shah\footnotemark[2]
}

\usepackage{amsopn}

\ifpdf
\hypersetup{
  pdftitle={Polarization-Induced Beam Bending:
Mathematical Model, Discretization, and Algorithm},
  pdfauthor={H. Antil, R. L\"ohner, S. Shah}
}
\fi

\begin{document}

\maketitle

\begin{abstract}
We study a reduced hydrodynamic formulation of paraxial vector beam propagation in which the beam intensity, optical phase, and spatially-dependent polarization are coupled through a nonlinear dispersive system. While prior analytical work derived a solution for the beam path valid for short propagation distances, a fully resolved numerical treatment of the model over long ranges has not previously been available.
 
Here we present a conservative numerical scheme for the coupled system, combining a finite-volume discretization of the intensity equation with monotone Hamilton--Jacobi (H-J) solvers for the phase dynamics and upwind transport of polarization. The method preserves the nonnegativity of the intensity and remains stable under long-distance propagation.

We perform large-scale simulations over propagation distances of tens of meters, while resolving millimeter-scale transverse structure. The numerical results reproduce the analytically predicted and experimentally observed quadratic beam
bending at short distances and reveal systematic deviations beyond the asymptotic regime. These deviations arise from nonlinear phase accumulation and dispersive effects captured by the full model but are neglected in the short-distance approximation.

\end{abstract}

\begin{keywords}
Paraxial optical beam; Reduced hydrodynamic model; Coupled Hamilton-Jacobi equations; Fully discrete schemes
\end{keywords}

\begin{MSCcodes}
78-10, 49Lxx, 78A60, 65M08
\end{MSCcodes}

\section{Introduction}
The propagation of paraxial optical vector beams, defined by their spatially varying polarization \cite{Chen:18}, admits a rich mathematical description that lies at the intersection of wave optics, fluid mechanics, and nonlinear partial differential equations. In recent years, reduced hydrodynamic formulations have emerged as a powerful
framework for understanding vector beam dynamics beyond purely geometric optics, while remaining tractable for analysis and computation \cite{Nichols:25a}.
These models describe the coupled evolution of beam intensity, phase, and
polarization-induced effects and provide a natural bridge between
microscopic wave equations and macroscopic transport phenomena. Such models have long been known in optics under the general heading of ``transport-of-intensity'' equations (see e.g., \cite{paganin1998noninterferometric,Zuo:15,nichols2018transport}) and are a natural application of the general hydrodynamic (Madelung-type) formulations of wave propagation (for details, see \cite{siegman1990new, madelung1927quantum}).

Hydrodynamic formulations of this type have been derived in several prior
works as asymptotic reductions of the paraxial wave equation \cite{siegman1990new}.
In particular, analytical studies have shown that spatial variations in
polarization can induce a systematic transverse bending of the beam,
leading to a curved trajectory of the beam centroid.
Closed-form expressions for this curvature have been obtained under
restrictive assumptions, such as weak polarization gradients, smooth
profiles, and short propagation distances \cite{JMNichols_DVNickel_FBucholtz_2022a}.
While these results provide valuable physical insight, they do not address
the fully nonlinear, coupled evolution of the intensity and phase fields
over long propagation ranges, nor do they quantify the regime of validity
of the asymptotic predictions. 

From a computational perspective, the numerical simulation of such
hydrodynamic beam models presents several challenges.
The phase equation is a nonlinear H-J equation that may
develop steep gradients, requiring monotone discretizations to capture
the correct viscosity solution.
The intensity equation is conservative and must preserve nonnegativity and
total mass.
The dispersion term involves second derivatives of the square-root of intensity and becomes singular in low-density regions. Finally, the polarization phase introduces additional coupling that must be handled consistently without destroying stability. To date, these difficulties have limited numerical studies primarily to simplified settings or short propagation distances.

The H-J equations are closely related to (scalar) conservation laws. A Godunov type upwind scheme for the H-J equation was proposed in \cite{kurganov2001semidiscrete}, a global Lax–Friedrichs flux splitting for H-J equations was introduced in \cite{osher1991high}, and a higher-order WENO scheme was presented in \cite{jiang2000weighted}. However, in the reduced hydrodynamic model, the Hamilton–Jacobi dynamics are coupled to the dispersive quantum-pressure term $Q(\rho)$, introducing a higher-order regularization that fundamentally changes the character of the phase equation and places additional stability demands on the discretization. The primary contribution of this work is the development and validation of a fully discrete numerical framework that enables stable, long-distance simulations of the complete reduced hydrodynamic model in two transverse dimensions. Our approach combines a conservative finite-volume discretization for the intensity equation \cite{leveque2002finite, jameson2008construction} with a monotone Godunov--Lax--Friedrichs scheme for the H-J phase equation.

Although LLF/Rusanov fluxes can be overly diffusive in purely hyperbolic settings \cite{edwards2006dominant}, in our model the dispersive term provides additional regularization; empirically, the LLF-based coupling yields stable long-distance propagation at the resolutions considered. To the best of our knowledge, a fully discrete coupling of a conservative finite-volume update for the intensity with a monotone Godunov–LLF discretization for the phase dynamics, tailored to the reduced hydrodynamic beam system with polarization forcing and quantum-pressure regularization, has not been previously reported and validated for stable, long-distance, fully resolved simulations. In this work, we derive an exact continuous energy balance (including boundary contributions). Reflecting boundary conditions are enforced in a manner consistent with mass conservation and the underlying physical model.

In addition to providing a detailed numerical methodology, we use the
resulting solver to investigate beam centroid dynamics over propagation
distances far beyond the regime accessible to existing analytical formulas.
This allows us to assess the range of validity of asymptotic curvature laws
and to identify qualitative changes in behavior that arise from fully
nonlinear coupling.
The numerical results thus complement and extend prior analytical work,
providing a more complete picture of polarization-induced beam bending. The short-distance bending law has also been compared against experiments in
\cite{JMNichols_DVNickel_FBucholtz_2022a}.

The remainder of the paper is organized as follows. 
In Section~\ref{sec:model}, we present the continuous model and discuss the role of each term. We also provide a proof of the momentum and energy balance. Section~\ref{sec:discretization} describes the numerical discretizations in detail, including stabilization and boundary treatment. In section~\ref{sec:discrete_results}, we establish key structural properties of the discrete scheme, including the exact conservation of mass and the positivity preservation of intensity. Section~\ref{sec:algorithm} discusses the pseudo-code and provides remarks on the choice of numerical schemes being used. Section~\ref{sec:numerics} presents numerical results and comparisons with theoretical and experimental predictions.
Concluding remarks and perspectives for future work are given in
Section~\ref{sec:outlook}.

% ==========================================================
\section{Mathematical Model}
\label{sec:model}

\subsection{Governing equations}

We consider a hydrodynamic model for the paraxial propagation of a monochromatic beam of light with wavenumber $k_0$.  The beam is presumed to be propagating predominantly in the $z>0$ direction with $(x,y)\in\Omega\subset\mathbb{R}^2$ denoting the transverse plane.  The electric field associated with the beam can be represented by the beam intensity $\rho(x,y,z)\ge 0$, a scalar dynamical phase $\phi(x,y,z)$
and a polarization-induced phase $\gamma(x,y,z)$.  The equations governing the evolution of these fields were described in \cite{Nichols:25a} and can be written
\begin{subequations}\label{eq:FormI}
\begin{align}
  \partial_z \rho
  + \frac{1}{k_0}\Div\big(\rho\,(\grad \phi + \grad \gamma)\big)
  &= 0,
  && \Omega \times (0,Z), \label{eq:rho}\\
  k_0\,\partial_z \phi
  + \frac12 |\grad\phi|^2
  + \frac12 |\grad\gamma|^2
  &= \frac12\,Q(\rho),
  && \Omega \times (0,Z), \label{eq:phi}\\
  \frac{D\gamma}{Dz} =   
  \partial_z \gamma
  + \frac{1}{k_0}\grad\phi\cdot\grad\gamma
  &= 0,
  && \Omega \times (0,Z). \label{eq:gamma}
\end{align}
\end{subequations}
The domain $\Omega$ is presumed bounded with a sufficiently smooth boundary $\partial\Omega$ and we let $Z>0$ denote the maximum propagation distance.
The model therefore couples a conservative transport equation for $\rho$ (\ref{eq:rho}) to a
Hamilton--Jacobi equation (H-J) for $\phi$ (\ref{eq:phi}), augmented by a dispersion-type
regularization (\ref{eq:gamma}).
Importantly, the polarization-induced phase $\gamma$ influences the phase dynamics through its gradient (last term on the left hand side of \ref{eq:phi}). The quantity
\begin{equation}\label{eq:Qdef}
  Q(\rho)
  := \frac{\Delta\sqrt{\rho}}{\sqrt{\rho}},
\end{equation}
is commonly referred to as a {\it quantum pressure}, so named for its appearance in hydro-dynamic models of the Schr\"odinger equation \cite{Nore:93}. In an optics context, this term is responsible for diffraction \cite{Nichols:25a}, acting in such a way as to drive the intensity profile toward a flatter distribution i.e. ``spread'' the beam out.  Gradient $\nabla$, Laplacian $\Delta$, and divergence $\Div$ operators are with respect to $(x,y) \in \Omega$. The system \eqref{eq:FormI} is supplemented with initial and boundary conditions,
which are specified below.

%------------------%
\subsection{Interpretation of the equations}
\paragraph{Intensity transport}
Equation~\eqref{eq:rho} expresses the (spatially) local conservation of intensity, transports the
intensity $\rho$ with velocity
$
  \bm v := \frac{1}{k_0}\grad \phi 
$.  Thus, the dynamical phase gradient is seen to define the optical path taken by each parcel of beam intensity \cite{Nichols:23}.  The argument of the divergence operator in \eqref{eq:rho} is the local momentum density which is seen to comprise of two components, $\rho\bm v$ and a term proportional to the polarization gradient 
${\bm \omega} := \frac{1}{k_0}\nabla \gamma.$  This latter component is unrelated to the optical path and serves as a ``stored'' momentum density $\rho\bm\omega$ available for exchange with the conventional momentum density $\rho\bm v$ during propagation \cite{Nichols:25a}.

\paragraph{Phase evolution}
Equation~\eqref{eq:phi} is a nonlinear H-J equation for the phase $\phi$. The quadratic term $|\grad\phi|^2$ drives the nonlinear evolution of $\phi$. The term $|\grad\gamma|^2$ acts as a polarization-induced forcing that couples polarization gradients into the phase dynamics and is the key mechanism behind polarization-driven bending in the coupled model.\;Lastly, the aforementioned diffraction term $Q(\rho)$ acts to spread the local intensity paths (as defined by the phase gradient) and prevents the formation of singularities in regions where $\rho$ remains
positive.

\paragraph{Polarization transport}
Equation~\eqref{eq:gamma} states that the polarization-induced phase is not changing along the local optical paths (total material derivative is zero) and is simply being advected by the velocity $\bm v$.

%------------------------%
\subsection{Boundary conditions and mass conservation}
We impose reflecting boundary conditions
\begin{equation}\label{eq:BC_neumann}
\partial_n \phi = \partial_n \gamma = 0 \quad \text{on }\partial\Omega\times(0,Z).
\end{equation}
However, we emphasize that for the application under consideration, the Neumann 
boundary conditions can be directly replaced by the decay at the boundary and all our
results below remain true. The reason
being that we choose the domain $\Omega$ to be sufficiently large so that the beam never 
reaches the boundary. 
Let $\theta=\phi+\gamma$ and recall $\bm v=k_0^{-1}\nabla\phi$ and $\bm \omega=k_0^{-1}\nabla\gamma$.
Then \eqref{eq:BC_neumann} implies ${\bm n}\cdot \bm v=0$ and ${\bm n}\cdot \bm \omega=0$ on $\partial\Omega$, and hence
\[
{\bm n}\cdot(\bm v+ \bm \omega)=\frac{1}{k_0}\,{\bm n}\cdot\nabla\theta=0 \qquad \text{on } \partial\Omega.
\] Consequently, the intensity equation
\eqref{eq:rho} satisfies the no-flux condition
\begin{equation}\label{eq:noflux_rho}
\bm n\cdot(\rho(\bm v + \bm \omega))=\frac{1}{k_0}\bm n\cdot(\rho\nabla\theta)=0
\quad\text{on }\partial\Omega\times(0,Z) ,
\end{equation}
ensuring that no intensity enters or leaves the computational domain. 
\begin{proposition}[Mass conservation]\label{prop:mass_continuous}
Under \eqref{eq:noflux_rho}, the total intensity
$
M(z):=\int_\Omega \rho(x,y,z)\,dx\,dy
$
is conserved: $M(z)=M(0)$ for all $z\in[0,Z]$.
\end{proposition}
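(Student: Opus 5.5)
The plan is to differentiate $M(z)$ in $z$, substitute the intensity equation \eqref{eq:rho}, and apply the divergence theorem so that the no-flux condition \eqref{eq:noflux_rho} eliminates the boundary term. We work under the standing regularity assumption that $\rho$, $\phi$, $\gamma$ are smooth enough on $\overline\Omega\times[0,Z]$: $\rho\in C^1$ in $z$, and the flux $\rho\nabla\theta$ is $C^1$ in space up to $\partial\Omega$. This is what is needed to differentiate under the integral sign and to use Gauss's theorem on the bounded smooth domain $\Omega$.

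First, since $\Omega$ is bounded and $\partial_z\rho$ is continuous on $\overline\Omega\times[0,Z]$, we have
\[
M'(z)=\int_\Omega \partial_z\rho\,dx\,dy .
\]
Next, write $\theta=\phi+\gamma$, so that \eqref{eq:rho} reads $\partial_z\rho=-k_0^{-1}\Div(\rho\nabla\theta)$. The divergence theorem then gives
\[
M'(z)=-\frac{1}{k_0}\int_{\partial\Omega}\bm n\cdot(\rho\nabla\theta)\,ds .
\]
By \eqref{eq:noflux_rho}, which follows from the Neumann conditions \eqref{eq:BC_neumann}, the integrand vanishes on $\partial\Omega$. Hence $M'(z)=0$ for all $z\in(0,Z)$. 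Because $M$ is continuous on $[0,Z]$ and differentiable in the interior, the mean value theorem yields $M(z)=M(0)$.

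The only delicate point is regularity, since the computation itself is immediate. If only weak solutions are available, we would instead test \eqref{eq:rho} against the constant function $1$ in its weak formulation. The no-flux condition is exactly what makes the constant test function admissible, because it contributes no boundary term. This gives
\[
\int_\Omega\rho(\cdot,z)=\int_\Omega\rho(\cdot,0)
\]
for a.e.\ $z$, and the identity extends to every $z$ by continuity of $z\mapsto\rho(\cdot,z)$ in $L^1$. The same argument covers the alternative of decay at the boundary mentioned after \eqref{eq:BC_neumann}: in that case the boundary flux vanishes because $\rho$ itself vanishes near $\partial\Omega$.
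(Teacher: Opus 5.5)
Your proof is correct and follows the paper's argument: differentiate $M$, substitute $\partial_z\rho=-k_0^{-1}\operatorname{div}(\rho\nabla\theta)$ from the intensity equation, apply the divergence theorem, and use \eqref{eq:noflux_rho} to eliminate the boundary integral. Your remarks on regularity and the weak formulation are sound additions that the paper leaves implicit.
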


\begin{proof}
Integrate \eqref{eq:rho} over $\Omega$ and use the divergence theorem:
\[
\frac{d}{dz}\int_\Omega \rho\,dxdy
= -\frac{1}{k_0}\int_\Omega \Div(\rho\nabla\theta)\,dxdy
= -\frac{1}{k_0}\int_{\partial\Omega} \bm n\cdot(\rho\nabla\theta)\,ds.
\]
The boundary integral vanishes by \eqref{eq:noflux_rho}; hence $dM/dz=0$.
\end{proof}

%--------------------------
\subsection{Momentum and energy balance}
This subsection provides the detailed proof of the momentum and energy balance laws, which hold for the system \eqref{eq:FormI}.
%------------------------------------------------------------
\paragraph{Local momentum equation (conservative form)}
%------------------------------------------------------------
Adding \eqref{eq:phi} and \eqref{eq:gamma} we get the $\theta-$H-J identity
\begin{equation}\label{eq:HJ_theta}
k_0 \partial_z \theta+\frac12|\nabla\theta|^2=\frac12Q(\rho).
\end{equation}
Here $\theta = \phi + \gamma$. Define the momentum density $m:=\frac{1}{k_0}\rho\,\nabla\theta \in \Omega.$ Then $(\rho,\theta)$ satisfy the \emph{local} momentum balance
\begin{equation}\label{eq:local_momentum_balance}
\partial_z m
+\Div\!\Big(\frac{1}{k_0^2}\,\rho\,\nabla\theta\otimes\nabla\theta + \bm S(\rho)\Big)
=0
\qquad\text{in }\Omega\times(0,Z).
\end{equation}
\textbf{Proof of \eqref{eq:local_momentum_balance}}. Differentiating $m=\frac1{k_0}\rho\nabla\theta$ yields
\[
\partial_z m=\frac1{k_0}\partial_z\rho\, \nabla\theta+\frac1{k_0}\rho\nabla\partial_z\theta.
\]
Use \eqref{eq:rho} to replace $\partial_z\rho$ and expand
$\Div(\rho\nabla\theta\otimes\nabla\theta)$ via the product rule:
\[
\Div(\rho\nabla\theta\otimes\nabla\theta)
=
\Div(\rho\nabla\theta)\,\nabla\theta
+\rho\,(\nabla\theta\cdot\nabla)\nabla\theta
=
\Div(\rho\nabla\theta)\,\nabla\theta
+\rho\,\nabla\!\Big(\tfrac12|\nabla\theta|^2\Big).
\]
Therefore
\[
\partial_z m + \Div\!\Big(\frac{1}{k_0^2}\rho\nabla\theta\otimes\nabla\theta\Big)
=
\frac1{k_0}\rho\nabla \partial_z\theta+\frac{1}{k_0^2}\rho\,\nabla\!\Big(\tfrac12|\nabla\theta|^2\Big).
\]
Next, take $\nabla$ of \eqref{eq:HJ_theta}:
\begin{equation} \label{eq:Nabla_HJ_theta}
k_0\nabla\partial_z \theta+\nabla\!\Big(\tfrac12|\nabla\theta|^2\Big)=\tfrac12\nabla Q(\rho),
\end{equation}
multiply by $\rho/k_0^2$ and substitute:
\[
\partial_z m + \Div\!\Big(\frac{1}{k_0^2}\rho\nabla\theta\otimes\nabla\theta\Big)
=
\frac{1}{2k_0^2}\rho\,\nabla Q(\rho).
\]
Finally, invoke Lemma~\ref{lem:Srelation_correct} in the form
$\Div \bm S(\rho)=-(1/(2k_0^2))\rho\nabla Q(\rho)$ to obtain \eqref{eq:local_momentum_balance}.

\begin{proposition}[Total momentum balance]\label{prop:momentum_balance_correct}
Assume $\rho\ge c_0>0$ and $(\rho,\phi,\gamma)$ are smooth solutions of \eqref{eq:FormI}. Define
\[
m:=\frac{1}{k_0}\rho\nabla\theta,
\qquad
\mathcal P(z):=\int_\Omega m(x,z)\,dx.
\]
Then $\mathcal P$ satisfies the exact boundary-flux identity
\begin{equation}\label{eq:global_momentum_flux}
\frac{d}{dz}\mathcal P(z)
=
-\int_{\partial\Omega}
\Big(\frac{1}{k_0^2}\rho\,\nabla\theta\otimes\nabla\theta+\bm S(\rho)\Big)\bm n\;ds.
\end{equation}
\end{proposition}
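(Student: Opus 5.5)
The plan is to integrate the local momentum balance \eqref{eq:local_momentum_balance} over $\Omega$ and convert the divergence term into a boundary flux with the divergence theorem. That local identity has already been derived from \eqref{eq:rho} and the $\theta$-H-J identity \eqref{eq:HJ_theta}, using Lemma~\ref{lem:Srelation_correct} in the form $\Div \bm S(\rho)=-(1/(2k_0^2))\rho\nabla Q(\rho)$.

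First I would justify the manipulations.

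\emph{Where the positivity assumption enters.} The hypothesis $\rho\ge c_0>0$ makes $\sqrt{\rho}$ smooth and bounded away from zero. Hence $Q(\rho)=\Delta\sqrt\rho/\sqrt\rho$ and the stress tensor $\bm S(\rho)$ are smooth up to the boundary. Combined with smoothness of $(\rho,\phi,\gamma)$, the flux tensor $F:=\frac{1}{k_0^2}\rho\nabla\theta\otimes\nabla\theta+\bm S(\rho)$ is $C^1$ on $\overline\Omega\times[0,Z]$.

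\emph{Differentiating under the integral.} Because $\Omega$ is bounded and $m$ is smooth, we may interchange $d/dz$ and $\int_\Omega$ to get
\[
\frac{d}{dz}\mathcal P(z)=\int_\Omega \partial_z m\,dx .
\]

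\emph{Integrating the local law.} Substituting \eqref{eq:local_momentum_balance} gives $\partial_z m=-\Div F$, where the divergence is taken row-wise. Thus each component satisfies $\partial_z m_i=-\sum_j\partial_j F_{ij}$.

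\emph{Boundary flux.} Applying the divergence theorem row by row on the smooth bounded domain $\Omega$ gives
\[
\int_\Omega \sum_j \partial_j F_{ij}\,dx=\int_{\partial\Omega}\sum_j F_{ij}n_j\,ds ,
\]
that is, $\int_{\partial\Omega}F\bm n\,ds$. Collecting the components yields \eqref{eq:global_momentum_flux}.

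\emph{Main obstacle.} The only delicate point is the index convention for the tensor divergence. It must match the one used in deriving \eqref{eq:local_momentum_balance}, namely $\Div(\rho\nabla\theta\otimes\nabla\theta)=\Div(\rho\nabla\theta)\nabla\theta+\rho(\nabla\theta\cdot\nabla)\nabla\theta$, so that the boundary term is $F\bm n$ rather than $F^{\top}\bm n$. Since $\nabla\theta\otimes\nabla\theta$ is symmetric, this only needs checking for $\bm S(\rho)$. I expect $\bm S(\rho)$ to be symmetric, since it is presumably built from $\nabla\sqrt\rho\otimes\nabla\sqrt\rho$ and $\Delta\rho\,I$-type terms, and then the ambiguity disappears.

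\emph{Remark on boundary conditions.} By \eqref{eq:BC_neumann}, $\nabla\theta\cdot\bm n=0$, so the convective part $\rho\nabla\theta(\nabla\theta\cdot\bm n)$ of the boundary flux vanishes. However, the statement keeps the full flux and does not need this simplification.
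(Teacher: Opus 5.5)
Your proposal is correct and matches the paper's proof: integrate the local conservative law \eqref{eq:local_momentum_balance} over $\Omega$ and apply the divergence theorem row by row. Your symmetry concern is settled by the appendix definition $\bm S(\rho)=\frac{1}{2k_0^2}\big(\nabla w\otimes\nabla w-w\,\nabla^2 w\big)$ with $w=\sqrt{\rho}$. This tensor contains no $\Delta\rho\, I$-type term, as you guessed, but it is symmetric because the Hessian is, so $F\bm n=F^{\top}\bm n$ as you wanted.
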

\begin{proof}
Integrate the local conservative law \eqref{eq:local_momentum_balance} over $\Omega$ and apply the divergence theorem:
\[
\frac{d}{dz}\int_\Omega m\,dx
+
\int_{\partial\Omega}
\Big(\frac{1}{k_0^2}\rho\,\nabla\theta\otimes\nabla\theta+\bm S(\rho)\Big)\bm n\;ds
=0.
\]
This is exactly \eqref{eq:global_momentum_flux}. 
\end{proof}

\begin{remark}[Momentum conservation] \label{remark:mom_conser}
The following holds
\begin{itemize}
\item On a bounded domain, momentum is \emph{not} conserved in general; it changes by the net momentum flux through $\partial\Omega$.
A sufficient condition for conservation is the \emph{stress-free} boundary condition (see \eqref{eq:global_momentum_flux})
\[
\Big(\frac{1}{k_0^2}\rho\,\nabla\theta\otimes\nabla\theta+\bm S(\rho)\Big)\bm n=0
\quad\text{on }\partial\Omega.
\]

\item On $\Omega=\mathbb R^2$ (with sufficient decay) or on a periodic torus, the right-hand side of \eqref{eq:global_momentum_flux} vanishes and $\mathcal P(z)$ is conserved.
\end{itemize}
\end{remark}

\begin{proposition}[Energy balance]
\label{prop:theta_energy_balance}
Assume there exists a constant $c_0>0$ such that $\rho(x,y,z)\ge c_0$ in $\Omega\times(0,Z)$, and $(\rho,\phi,\gamma)$ are smooth solutions of \eqref{eq:FormI}. Set $w:=\sqrt{\rho}$ and $Q(\rho):=\Delta w/w$.
Define the total energy
\[
\mathcal E(z)
:=
\frac{1}{2k_0^2}\int_\Omega \rho\,|\nabla\theta|^2\,dx
+\frac{1}{2k_0^2}\int_\Omega |\nabla w|^2\,dx.
\]
Assume the no-flux wall $\bm n\cdot(\rho\nabla\theta)=0$ on $\partial\Omega\times(0,Z)$.
Then, for every $z\in(0,Z)$,
\[
\frac{d}{dz}\mathcal E(z)
= \frac{1}{k_0^2}\int_{\partial\Omega} (\partial_n w)\,\partial_zw\,ds.
\]
Consequently, $\mathcal E(z)$ is conserved on $[0,Z]$ if the boundary contribution vanishes.
\end{proposition}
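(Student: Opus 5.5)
We differentiate the two pieces of $\mathcal E$ separately and use only the $\theta$--H-J identity \eqref{eq:HJ_theta} together with \eqref{eq:rho}, written as $\partial_z\rho=-\frac{1}{k_0}\Div(\rho\nabla\theta)$.

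\textbf{Kinetic part.} Differentiating gives
\[
\frac{d}{dz}\frac{1}{2k_0^2}\int_\Omega\rho|\nabla\theta|^2\,dx=\frac{1}{2k_0^2}\int_\Omega\partial_z\rho\,|\nabla\theta|^2\,dx+\frac{1}{k_0^2}\int_\Omega\rho\nabla\theta\cdot\nabla\partial_z\theta\,dx .
\]
In the second integral we integrate by parts. The boundary term $\int_{\partial\Omega}\partial_z\theta\,\rho\,\bm n\cdot\nabla\theta\,ds$ vanishes by the no-flux wall, so the integral becomes $-\frac{1}{k_0^2}\int_\Omega\Div(\rho\nabla\theta)\,\partial_z\theta\,dx=\frac{1}{k_0}\int_\Omega\partial_z\rho\,\partial_z\theta\,dx$. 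We then substitute $k_0\partial_z\theta=\frac12 Q(\rho)-\frac12|\nabla\theta|^2$ from \eqref{eq:HJ_theta}. The $|\nabla\theta|^2$ contribution equals $-\frac{1}{2k_0^2}\int_\Omega\partial_z\rho|\nabla\theta|^2\,dx$ and cancels the first integral exactly. What remains is
\[
\frac{d}{dz}\frac{1}{2k_0^2}\int_\Omega\rho|\nabla\theta|^2\,dx=\frac{1}{2k_0^2}\int_\Omega Q(\rho)\,\partial_z\rho\,dx .
\]

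\textbf{Quantum (gradient) part.} With $w=\sqrt\rho$ we have $\partial_z\rho=2w\,\partial_zw$ and $Q(\rho)=\Delta w/w$, so $Q(\rho)\partial_z\rho=2\Delta w\,\partial_z w$. The positivity $\rho\ge c_0$ guarantees that $w$ is smooth and that this division is legitimate. The kinetic contribution is therefore $\frac{1}{k_0^2}\int_\Omega\Delta w\,\partial_zw\,dx$. Green's identity gives
\[
\int_\Omega\Delta w\,\partial_zw\,dx=\int_{\partial\Omega}\partial_nw\,\partial_zw\,ds-\frac12\frac{d}{dz}\int_\Omega|\nabla w|^2\,dx .
\]
Moving the last term to the left combines it with $\frac{d}{dz}\frac{1}{2k_0^2}\int_\Omega|\nabla w|^2\,dx$ to form $\frac{d}{dz}\mathcal E$. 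This yields exactly
\[
\frac{d}{dz}\mathcal E(z)=\frac{1}{k_0^2}\int_{\partial\Omega}(\partial_nw)\,\partial_zw\,ds .
\]
Integrating in $z$ then gives conservation whenever this boundary term vanishes.

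\textbf{Main obstacle.} The delicate point is the bookkeeping of boundary terms. We must check that the only boundary contribution from the kinetic part is the one killed by $\bm n\cdot(\rho\nabla\theta)=0$. We must also confirm that no Neumann condition on $w$ is being assumed, so the $\partial_nw\,\partial_zw$ term genuinely survives. Finally, the sign conventions between \eqref{eq:HJ_theta} and the $Q$ term must produce the cancellation rather than doubling. Smoothness and $\rho\ge c_0$ justify all differentiations under the integral sign.
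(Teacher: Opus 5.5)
Your proof is correct and follows essentially the same argument as the paper. Both reduce the kinetic part to $\frac{1}{2k_0^2}\int_\Omega \partial_z\rho\,Q(\rho)\,dx$ using the no-flux Green identity together with \eqref{eq:rho}, then cancel it against the derivative of $\frac{1}{2k_0^2}\int_\Omega|\nabla w|^2\,dx$, which leaves $\frac{1}{k_0^2}\int_{\partial\Omega}(\partial_n w)\,\partial_z w\,ds$. The only difference is the order of steps: you integrate by parts once against $\partial_z\theta$ and then substitute the undifferentiated $\theta$--H-J identity, whereas the paper substitutes its gradient first and applies the Green identity separately to $f=Q(\rho)$ and $f=\tfrac12|\nabla\theta|^2$, which is the same computation by linearity.
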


\begin{proof}
Fix $z\in(0,Z)$. We assume throughout that $\rho>0$ and $(\rho,\phi,\gamma)$ are smooth enough so that all derivatives and integrations by parts below are justified.

\medskip
\noindent\boxed{\mbox{\textrm Step 0:}} 
Let
\[
w:=\sqrt{\rho},\qquad Q(\rho):=\frac{\Delta w}{w} ,
\]
and 
\[
\mathcal E(z)
=
\underbrace{\frac{1}{2k_0^2}\int_\Omega \rho\,|\nabla\theta|^2\,dx}_{=: \mathcal E_{\theta,\mathrm{kin}}(z)}
\;+\;
\underbrace{\frac{1}{2k_0^2}\int_\Omega |\nabla w|^2\,dx}_{=: \mathcal E_{\theta,\mathrm{q}}(z)}.
\]

%------------------------------------------------------------
\medskip
\noindent\boxed{\mbox{\textrm Step 1:}} 
Differentiate $\mathcal E_{\theta,\mathrm{kin}}$ using the product rule:
\begin{align}
\frac{d}{dz}\mathcal E_{\theta,\mathrm{kin}}
&=
\frac{1}{2k_0^2}\int_\Omega \partial_z\rho\,|\nabla\theta|^2\,dx
+\frac{1}{k_0^2}\int_\Omega \rho\,\nabla\theta\cdot\nabla\partial_z\theta\,dx.
\label{eq:dEkin_theta_split_pf}
\end{align}

\smallskip
\noindent{\it Step 1a: eliminate $\nabla\partial_z\theta$ via the $\theta-$H-J equation.}
Dot equation \eqref{eq:Nabla_HJ_theta} with $\rho\nabla\theta$ and integrate over $\Omega$:
\begin{align}
k_0\int_\Omega \rho\,\nabla\theta\cdot\nabla\partial_z\theta\,dx
+ \int_\Omega \rho\,\nabla\theta\cdot \nabla\!\Big(\tfrac12|\nabla\theta|^2\Big)\,dx
&=
\frac12\int_\Omega \rho\,\nabla\theta\cdot\nabla Q(\rho)\,dx.
\label{eq:HJ_grad_dotted_pf}
\end{align}
Divide by $k_0^3$ and substitute into the second term in \eqref{eq:dEkin_theta_split_pf}:
\begin{align}
\frac{1}{k_0^2}\int_\Omega \rho\,\nabla\theta\cdot\nabla\partial_z\theta\,dx
&=
\frac{1}{2k_0^3}\int_\Omega \rho\,\nabla\theta\cdot\nabla Q(\rho)\,dx
-\frac{1}{k_0^3}\int_\Omega \rho\,\nabla\theta\cdot \nabla\!\Big(\tfrac12|\nabla\theta|^2\Big)\,dx.
\label{eq:term_theta_grad_theta_z_pf}
\end{align}
Plugging \eqref{eq:term_theta_grad_theta_z_pf} into \eqref{eq:dEkin_theta_split_pf} yields
\begin{align}
\begin{aligned}
\frac{d}{dz}\mathcal E_{\theta,\mathrm{kin}}
&=
\frac{1}{2k_0^2}\int_\Omega \partial_z\rho\,|\nabla\theta|^2\,dx
+\frac{1}{2k_0^3}\int_\Omega \rho\,\nabla\theta\cdot\nabla Q(\rho)\,dx \\
&\quad-\frac{1}{k_0^3}\int_\Omega \rho\,\nabla\theta\cdot \nabla\!\Big(\tfrac12|\nabla\theta|^2\Big)\,dx.
\label{eq:dEkin_theta_three_terms_pf}
\end{aligned}
\end{align}

\smallskip
\noindent{\it Step 1b: convert the last two terms using the continuity equation.}
Let $f$ be any smooth scalar field (depending on $x$ and the fixed $z$). 
The standard Green's identity with $\bm n \cdot \nabla \theta =0$ gives:
\begin{equation}\label{eq:key_flux_identity_pf_noflux_correct}
\int_\Omega \rho\,\nabla\theta\cdot\nabla f\,dx
=
-\int_\Omega f\,\Div(\rho\nabla\theta)\,dx. \nonumber
\end{equation}
Then using \eqref{eq:FormI} gives $\Div(\rho\nabla\theta)=k_0(-\partial_z\rho)$, and therefore
\begin{equation}\label{eq:rho_theta_convert_pf_correct}
\int_\Omega \rho\,\nabla\theta\cdot\nabla f\,dx
= k_0\int_\Omega \partial_z\rho\,f\,dx .
\end{equation}

Apply \eqref{eq:rho_theta_convert_pf_correct} with $f=Q(\rho)$ and $f=\tfrac12|\nabla\theta|^2$ to obtain
\begin{align}
\int_\Omega \rho\,\nabla\theta\cdot\nabla Q(\rho)\,dx
&=
k_0\int_\Omega \partial_z\rho\,Q(\rho)\,dx,
\label{eq:convert_Q_pf_correct}\\
\int_\Omega \rho\,\nabla\theta\cdot\nabla\!\Big(\tfrac12|\nabla\theta|^2\Big)\,dx
&=
k_0\int_\Omega \partial_z\rho\,(\tfrac12|\nabla\theta|^2)\,dx.
\label{eq:convert_theta_sq_pf_correct}
\end{align}
Insert \eqref{eq:convert_Q_pf_correct}--\eqref{eq:convert_theta_sq_pf_correct} into
\eqref{eq:dEkin_theta_three_terms_pf}:
\begin{align}
\frac{d}{dz}\mathcal E_{\theta,\mathrm{kin}}
&=
\frac{1}{2k_0^2}\int_\Omega \partial_z\rho\,|\nabla\theta|^2\,dx
+\frac{1}{2k_0^3}\,k_0\int_\Omega \partial_z\rho\,Q(\rho)\,dx
-\frac{1}{k_0^3}\,k_0\int_\Omega \partial_z\rho\,(\tfrac12|\nabla\theta|^2)\,dx
\nonumber\\
&=
\frac{1}{2k_0^2}\int_\Omega \partial_z\rho\,Q(\rho)\,dx .
\label{eq:dEkin_theta_final_correct}
\end{align}

%------------------------------------------------------------
\medskip
\noindent\boxed{\mbox{\textrm Step 2:}} Differentiate  the ``quantum'' part $\mathcal E_{\theta,\mathrm{q}}$:
\begin{align}
\frac{d}{dz}\mathcal E_{\theta,\mathrm{q}}
&=
\frac{1}{2k_0^2}\frac{d}{dz}\int_\Omega |\nabla w|^2\,dx
=
\frac{1}{k_0^2}\int_\Omega \nabla w\cdot\nabla \partial_zw\,dx.
\label{eq:dEq_start_pf}
\end{align}
Integrate by parts in space:
\[
\int_\Omega \nabla w\cdot\nabla \partial_zw\,dx
=
-\int_\Omega (\Delta w)\,\partial_zw\,dx
+\int_{\partial\Omega} (\partial_n w)\,\partial_zw\,ds.
\]
Since $\rho=w^2$, we have $\partial_z\rho=2w\partial_zw$, i.e.,\ $\partial_zw=\partial_z\rho/(2w)$, and therefore
\[
(\Delta w)\,\partial_zw
=
(\Delta w)\frac{\partial_z\rho}{2w}
=
\frac12\,\partial_z\rho\,\frac{\Delta w}{w}
=
\frac12\,\partial_z\rho\,Q(\rho).
\]
Substitute this into \eqref{eq:dEq_start_pf} to obtain
\begin{equation}\label{eq:dEq_final_pf}
\frac{d}{dz}\mathcal E_{\theta,\mathrm{q}}
= -\frac{1}{2k_0^2}\int_\Omega \partial_z\rho\,Q(\rho)\,dx + \frac{1}{k_0^2}\int_{\partial\Omega} (\partial_n w)\,\partial_zw\,ds.
\end{equation}
Adding \eqref{eq:dEkin_theta_final_correct} and \eqref{eq:dEq_final_pf}, the energy $\mathcal E(z)$ satisfies the exact boundary-balance identity above.
\end{proof}

% ==========================================================
\subsection{Centroid identity}
\label{subsec:S3_short_distance}
The objective of this subsection is to show that, for the full model \eqref{eq:FormI}, the centroid does not drift because momentum is conserved. {This is indeed the case for the class of solutions we are considering. Given the nonlinear (possible solution nonuniqueness) nature of the problem, it may be possible to identify other solutions exhibiting a different behavior.} We begin by deriving an exact expression for the centroid associated with the full model. Throughout this section, assume $\rho_0=\rho(\cdot,0)$ is smooth and strictly positive on $\Omega$, $\phi_0=\phi(\cdot,0)$, and $\gamma_0=\gamma(\cdot,0)$ are smooth, and Neumann conditions \eqref{eq:BC_neumann} hold. We first derive the following centroid evolution identity.

% ----------------------------------------------------------
\begin{lemma}[Centroid evolution identity]\label{lem:centroid_identity}
Let $
0 < M(z):=\int_\Omega \rho(x,y,z)\,dx\,dy
< \infty$, and \eqref{eq:noflux_rho}. Define the centroid
\begin{equation}\label{eq:centroid}
(x_c(z),y_c(z))
:=\frac{1}{M(z)}\int_\Omega (x,y)\,\rho(x,y,z)\,dx\,dy.
\end{equation}
Then
\[
\frac{d}{dz}x_c(z)=\frac{1}{k_0 M(z)}\int_\Omega \rho\,\partial_x\theta\,dx\,dy,
\qquad
\frac{d}{dz}y_c(z)=\frac{1}{k_0 M(z)}\int_\Omega \rho\,\partial_y\theta\,dx\,dy.
\]
\end{lemma}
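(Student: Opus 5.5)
The plan is to differentiate the centroid definition \eqref{eq:centroid} in $z$, replace $\partial_z\rho$ using the intensity equation \eqref{eq:rho}, and then integrate by parts once in space. The no-flux condition \eqref{eq:noflux_rho} removes the boundary term.

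First, Proposition~\ref{prop:mass_continuous} gives $M(z)=M(0)$, so $M$ is a positive constant. Hence
\[
\frac{d}{dz}x_c(z)=\frac{1}{M}\int_\Omega x\,\partial_z\rho\,dx\,dy .
\]
Differentiating under the integral is justified by the smoothness of $\rho$ assumed throughout this subsection. Since $\theta=\phi+\gamma$, equation \eqref{eq:rho} reads $\partial_z\rho=-\frac{1}{k_0}\Div(\rho\nabla\theta)$. Substituting this gives
\[
\frac{d}{dz}x_c=-\frac{1}{k_0M}\int_\Omega x\,\Div(\rho\nabla\theta)\,dx\,dy .
\]

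Next, apply the Green identity already used in Step 1b of the proof of Proposition~\ref{prop:theta_energy_balance}, now with test function $f=x$. This yields
\[
-\int_\Omega x\,\Div(\rho\nabla\theta)\,dx\,dy=\int_\Omega \rho\,\nabla\theta\cdot\nabla x\,dx\,dy-\int_{\partial\Omega}x\,\bm n\cdot(\rho\nabla\theta)\,ds .
\]
The boundary integral vanishes pointwise by \eqref{eq:noflux_rho}. Because $\nabla x=(1,0)$, the volume term equals $\int_\Omega\rho\,\partial_x\theta$. This gives the first formula; the $y_c$ formula follows the same way with $f=y$.

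The only step that needs care is the boundary term. It is here that the no-flux condition on the combined flux $\rho\nabla\theta$ is required, rather than on $\rho\nabla\phi$ alone. This also explains why the centroid velocity involves $\theta$ and not just $\phi$. If $M$ were not known to be constant, the quotient rule would add the term $-M'x_c/M$. Proposition~\ref{prop:mass_continuous} shows this term is zero under the same hypothesis.
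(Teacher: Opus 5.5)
Your proposal is correct and follows the paper's proof essentially step for step. You differentiate the first moment, substitute $\partial_z\rho=-\frac{1}{k_0}\Div(\rho\nabla\theta)$, integrate by parts against the coordinate function, discard the boundary term via \eqref{eq:noflux_rho}, and use Proposition~\ref{prop:mass_continuous} to treat $M(z)$ as constant.
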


\begin{proof}
We show the $y$-identity; the $x$-case is identical. Differentiate under the
integral and use \eqref{eq:rho}:
\[
\frac{d}{dz}\int_\Omega y\rho\,dxdy
= \int_\Omega y\,\partial_z\rho\,dxdy
= -\frac{1}{k_0}\int_\Omega y\,\Div(\rho\nabla\theta)\,dxdy.
\]
Integrate by parts:
\[
-\int_\Omega y\,\Div(\rho\nabla\theta)\,dxdy
= -\int_{\partial\Omega} y\,\bm n\cdot(\rho\nabla\theta)\,ds
+ \int_\Omega \nabla y\cdot(\rho\nabla\theta)\,dxdy.
\]
The boundary term vanishes by \eqref{eq:noflux_rho}; since $\nabla y=(0,1)$,
the volume term becomes $\int_\Omega \rho\,\partial_y\theta\,dxdy$. Divide by
$M(z)$ and use Proposition~\ref{prop:mass_continuous}, i.e., $\tfrac{dM}{dz} = 0$ 
to conclude.
\end{proof}

%%---------------------%%
\begin{theorem}[Linear-in-$z$ order]
\label{thm:yc_linear}\;Let the solution $(\rho,\phi,\gamma)$ to \eqref{eq:FormI} be sufficiently smooth. Then
\begin{equation}\label{eq:ycz}
y_c(z) = y_c(0)+\frac{\mathcal P_y}{M(z)}z,
\end{equation}
where $\mathcal P(z) = (\mathcal P_x(z), \mathcal P_y(z))$ is total transverse momentum defined in Proposition~\ref{prop:momentum_balance_correct}.
\end{theorem}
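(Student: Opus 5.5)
The plan is to combine the centroid evolution identity (Lemma~\ref{lem:centroid_identity}) with momentum conservation. Lemma~\ref{lem:centroid_identity} gives
\[
\frac{d}{dz}y_c(z)=\frac{1}{k_0M(z)}\int_\Omega\rho\,\partial_y\theta\,dx\,dy=\frac{\mathcal P_y(z)}{M(z)},
\]
since $m=\frac1{k_0}\rho\nabla\theta$ and $\mathcal P(z)=\int_\Omega m\,dx$. By Proposition~\ref{prop:mass_continuous}, $M(z)=M(0)$ is constant under the no-flux condition \eqref{eq:noflux_rho}, and the latter follows from the Neumann conditions \eqref{eq:BC_neumann}. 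It therefore remains to show that $\mathcal P_y$ is independent of $z$, and then to integrate from $0$ to $z$. This gives $y_c(z)=y_c(0)+z\,\mathcal P_y/M$, which is \eqref{eq:ycz} with $M(z)\equiv M(0)$.

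For the $z$-independence of $\mathcal P_y$, I will invoke Proposition~\ref{prop:momentum_balance_correct}. This requires the smoothness hypothesis together with $\rho\ge c_0>0$, which is part of the standing assumptions of this subsection. The proposition gives $\frac{d}{dz}\mathcal P$ as minus the boundary flux of the stress $\frac{1}{k_0^2}\rho\nabla\theta\otimes\nabla\theta+\bm S(\rho)$.

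\textbf{Convective part.} The first part of the flux vanishes on $\partial\Omega$, because $(\nabla\theta\otimes\nabla\theta)\bm n=\nabla\theta\,(\partial_n\theta)=0$ by \eqref{eq:BC_neumann}.

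\textbf{Quantum-pressure part.} For the term $\bm S(\rho)\bm n$, I will use the explicit form of $\bm S$ from Lemma~\ref{lem:Srelation_correct}. This is the standard Bohm tensor, a combination of $\nabla w\otimes\nabla w$, $w\nabla^2 w$ and $\Delta\rho\,I$. I will then appeal to the paper's standing modelling assumption that the beam never reaches the boundary: $\rho$ and its derivatives decay there, and the Neumann conditions may be replaced by decay. Under this assumption the boundary integral vanishes, as in the second item of Remark~\ref{remark:mom_conser}. Hence $\mathcal P$, and in particular $\mathcal P_y$, is constant.

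\textbf{Main obstacle.} The delicate step is exactly this vanishing of the $\bm S(\rho)\bm n$ boundary term. Neumann data on $\phi$ and $\gamma$ alone do not control $\partial_n\rho$ or the higher derivatives of $\rho$. I will therefore make explicit that the theorem is read under the stress-free or decay condition of Remark~\ref{remark:mom_conser}, for example by taking $\Omega$ large enough that $\rho$ is effectively constant near $\partial\Omega$. If I wanted to avoid that assumption, the fallback would be to keep the flux term. One would then obtain $y_c(z)=y_c(0)+z\,\mathcal P_y(0)/M+O(z^2)$, with the correction controlled by the boundary values of $\rho$. This corrected form only reduces to \eqref{eq:ycz} under the decay hypothesis.
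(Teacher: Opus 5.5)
Your proposal is correct and follows the paper's proof. Lemma~\ref{lem:centroid_identity} gives $y_c'=\mathcal P_y/M$, $M$ is conserved, and $\mathcal P_y$ is constant by Proposition~\ref{prop:momentum_balance_correct} together with the stress-free/decay condition of Remark~\ref{remark:mom_conser}, after which you integrate. You check explicitly that the convective flux vanishes by \eqref{eq:BC_neumann} while the $\bm S(\rho)\bm n$ term needs the decay hypothesis; this is what the paper assumes implicitly when it cites the remark (minor point: the paper's $\bm S$ in \eqref{eq:quantum_stress_def} has no $\Delta\rho\, I$ term, which does not affect your argument).
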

\begin{proof}
Lemma~\ref{lem:centroid_identity} gives
$y_c'(z)=\frac{1}{k_0 M(z)}\int_\Omega \rho\,\partial_y\theta \,dx\,dy$. Using the definition of $\mathcal P(z)$, we arrive at (here ${\mathcal P_y}$ is the $y$-th component)
\[
y_c'(z) = \frac{\mathcal P_y(z)}{M(z)}.
\]
In view of Proposition~\ref{prop:momentum_balance_correct} and Remark~\ref{remark:mom_conser}, 
the total momentum vector $\mathcal P(z)$ is conserved, its $y$-th component is also conserved. 
Consequently, $y_c'(z) = \frac{\mathcal P_y}{M(z)}$ is constant, and therefore \eqref{eq:ycz}
holds.
\end{proof} 

This derivation uses the original three-equation system directly. The key point is that the apparent forcing from \( |\grad\gamma|^2\) in the \(\phi\)-equation is exactly balanced by the evolution equation for \(\gamma\).

%%---------------------------%%
\subsection{Reduced model}
The experimental observation and short-distance theory in \cite{Nichols:25a} motivates us to also focus on the \emph{modified} (reduced) intensity equation,
which is free of the $\nabla\gamma$ contribution. Consequently, the model becomes:
\begin{subequations}\label{eq:Form_mod}
\begin{align}
  \partial_z \rho
  + \frac{1}{k_0}\Div\big(\rho\,\grad \phi \big)
  &= 0,
  && \Omega \times (0,Z), \label{eq:rho_mod}\\
  k_0\,\partial_z \phi
  + \frac12 |\grad\phi|^2
  + \frac12 |\grad\gamma|^2
  &= \frac12\,Q(\rho),
  && \Omega \times (0,Z), \label{eq:phi_nmod}\\
  \frac{D\gamma}{Dz} =   
  \partial_z \gamma
  + \frac{1}{k_0}\grad\phi\cdot\grad\gamma
  &= 0,
  && \Omega \times (0,Z). \label{eq:gamma_nmod}
\end{align}
\end{subequations}

\begin{theorem}[Nonlinear-in-$z$ order] \label{thm:yc_nonlinear}
Let the solution $(\rho,\phi,\gamma)$ of \eqref{eq:Form_mod} be sufficiently smooth
and $\int_{\partial\Omega} \bm S(\rho) \,\bm n \, ds = 0$. Then the
transverse momentum
${\mathcal P}(z):=\frac{1}{k_0}\int_\Omega \rho \nabla \theta \, dx\,dy,$
satisfies 
\[
{\mathcal P}'(z)
=
-\frac{1}{k_0^2}\int_\Omega \rho\, D^2\theta\, \nabla\gamma \, dx\,dy.
\]
Here, $D^2\theta$ is the Hessian of $\theta$. In particular, the $y$-component satisfies
\[
{\mathcal P}_y'(z)
=
-\frac{1}{k_0^2}\int_\Omega \rho\,\nabla\gamma\cdot \nabla(\partial_y\theta)\,dx\,dy.
\]
The centroid $y_c(z)$ can be expressed as
\[
y_c(z)
=
y_c(0)
+\frac{{\mathcal P}_y(0)}{M}\,z
-\frac{1}{M}\int_0^z \int_0^s F_y(\tau)\,d\tau\,ds
-\frac{1}{k_0 M}\int_0^z\left(\int_\Omega \rho\,\gamma_y\,dx\,dy\right)ds,
\]
where
\[
F_y(z)
:=
\frac{1}{k_0^2}\int_\Omega \rho\,\nabla\gamma\cdot \nabla(\partial_y\theta)\,dx\,dy.
\]
\end{theorem}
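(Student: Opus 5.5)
The plan is to redo the momentum computation behind \eqref{eq:local_momentum_balance}, now for the reduced system \eqref{eq:Form_mod}. The only change is the continuity equation, and it produces exactly one extra source term.

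\textbf{Step 1: the $\theta$-equation is unchanged.} Add \eqref{eq:phi_nmod} to $k_0$ times \eqref{eq:gamma_nmod}. Using $\tfrac12|\nabla\phi|^2+\nabla\phi\cdot\nabla\gamma+\tfrac12|\nabla\gamma|^2=\tfrac12|\nabla\theta|^2$, this recovers \eqref{eq:HJ_theta}, and hence also \eqref{eq:Nabla_HJ_theta}.

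\textbf{Step 2: rewrite the continuity equation in terms of $\theta$.} Since $\nabla\phi=\nabla\theta-\nabla\gamma$, equation \eqref{eq:rho_mod} becomes
\[
\partial_z\rho=-\frac{1}{k_0}\Div(\rho\nabla\theta)+\frac{1}{k_0}\Div(\rho\nabla\gamma).
\]

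\textbf{Step 3: local momentum balance with a source.} Repeat the proof of \eqref{eq:local_momentum_balance} verbatim, using Lemma~\ref{lem:Srelation_correct} for the $Q$-term. This gives
\[
\partial_z m+\Div\Big(\frac{1}{k_0^2}\rho\nabla\theta\otimes\nabla\theta+\bm S(\rho)\Big)=\frac{1}{k_0^2}\Div(\rho\nabla\gamma)\,\nabla\theta .
\]

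\textbf{Step 4: integrate over $\Omega$.} There are two boundary contributions on the left. The convective flux $\rho\nabla\theta(\nabla\theta\cdot\bm n)$ vanishes because $\partial_n\phi=\partial_n\gamma=0$ gives $\partial_n\theta=0$. The stress flux $\int_{\partial\Omega}\bm S(\rho)\bm n\,ds$ vanishes by hypothesis. On the right, integrate componentwise by parts:
\[
\int_\Omega \Div(\rho\nabla\gamma)\,\partial_j\theta\,dx=\int_{\partial\Omega}\rho\,\partial_n\gamma\,\partial_j\theta\,ds-\int_\Omega\rho\,\nabla\gamma\cdot\nabla\partial_j\theta\,dx .
\]
The boundary term is zero since $\partial_n\gamma=0$. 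Because $D^2\theta$ is symmetric, the remaining volume term is the $j$-th component of $\int_\Omega\rho\,D^2\theta\,\nabla\gamma\,dx$. This yields the stated formula for $\mathcal P'$, and its $y$-component is $\mathcal P_y'=-F_y$.

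\textbf{Step 5: centroid.} Mass is still conserved, since $\bm n\cdot(\rho\nabla\phi)=0$ and the argument of Proposition~\ref{prop:mass_continuous} applies; so $M$ is constant. Redo Lemma~\ref{lem:centroid_identity} with flux $\rho\nabla\phi$ in place of $\rho\nabla\theta$; the boundary term again vanishes by $\partial_n\phi=0$. This gives
\[
y_c'(z)=\frac{1}{k_0M}\int_\Omega\rho\,\phi_y\,dx\,dy=\frac{\mathcal P_y(z)}{M}-\frac{1}{k_0M}\int_\Omega\rho\,\gamma_y\,dx\,dy .
\]
Substitute $\mathcal P_y(s)=\mathcal P_y(0)-\int_0^sF_y(\tau)\,d\tau$ from Step 4 and integrate from $0$ to $z$ to obtain the stated expression for $y_c(z)$.

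The main obstacle is bookkeeping rather than any hard estimate. One must track that the extra source $\frac{1}{k_0^2}\Div(\rho\nabla\gamma)\nabla\theta$ is the only deviation from the full-model balance. One must also check that each boundary term vanishes for the right reason: $\partial_n\theta=0$ for the convective flux, the hypothesis on $\bm S$ for the stress flux, and $\partial_n\gamma=0$ for the integration by parts. Finally, in the centroid step the transport velocity is $\nabla\phi$, not $\nabla\theta$, which is what produces the last $\gamma_y$ term.
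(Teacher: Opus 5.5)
Your proposal is correct and is essentially the paper's argument. It uses the same ingredients in the same roles: the gradient of the $\theta$-H-J identity \eqref{eq:Nabla_HJ_theta}, Lemma~\ref{lem:Srelation_correct} for the $Q$-term, one integration by parts, the boundary hypothesis on the quantum stress, and the centroid identity with flux $\rho\nabla\phi$ combined with $\phi_y=\theta_y-\gamma_y$. The difference is cosmetic: you package the computation as a local balance with source $\frac{1}{k_0^2}\operatorname{div}(\rho\nabla\gamma)\nabla\theta$ and so require $\partial_n\theta=0$ and $\partial_n\gamma=0$ separately, whereas the paper differentiates $\mathcal P$ directly and integrates $\operatorname{div}(\rho\nabla\phi)\nabla\theta$ by parts, so it needs only $\partial_n\phi=0$ (your two boundary terms combine into exactly that one).
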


\begin{proof}
Total transverse momentum for the reduced system \eqref{eq:Form_mod} is the same 
as before (see also \cite{Nichols:25a}):
${\mathcal P}(z):= \frac{1}{k_0}\int_{\Omega}\rho(\grad \phi+ \grad \gamma)\,dx\,dy = \frac{1}{k_0}\int_{\Omega}\rho \grad \theta\,dx\,dy $.
Using \eqref{eq:rho_mod} and Lemma~\ref{lem:centroid_identity}, we obtain
\begin{equation}\label{eq:centroid_prime}
y_c'(z) = \frac{1}{k_0 M}\int_\Omega \rho\,\partial_y\phi\,dx\,dy =  \frac{1}{M(z)} \left({\mathcal P}_y(z) - \frac{1}{k_0}\int_\Omega \rho \gamma_y  \,dxdy \right).
\end{equation}
 Next, we show that ${\mathcal P}(z)$ is no longer conserved. Differentiating ${\mathcal P}(z)$
and using \eqref{eq:rho_mod} and \eqref{eq:Nabla_HJ_theta}, we get
\[
 {\mathcal P}'(z)
=
-\frac{1}{k_0^2}\int_\Omega \Div(\rho\nabla\phi)\,\nabla\theta 
-\frac{1}{k_0^2}\int_\Omega \rho\,\nabla\!\left(\frac12 |\nabla\theta|^2\right) 
+\frac{1}{2k_0^2}\int_\Omega \rho\,\nabla Q(\rho)\,.
\]
Integrating the first term by parts, using the boundary condition, and invoking Lemma~\ref{lem:Srelation_correct} in the form
$\Div \bm S(\rho)=-(1/(2k_0^2))\rho\nabla Q(\rho)$, we obtain
\[
{\mathcal P}'(z)
=
-\frac{1}{k_0^2}\int_\Omega \rho\,\,D^2\theta\,\nabla\gamma\,dx\,dy
-\int_\Omega \Div \bm S(\rho) \,dx\,dy.
\]
Using the divergence theorem, the above equation becomes
\begin{equation} \label{eq:mome_prime}
{\mathcal P}'(z)
= -\frac{1}{k_0^2}\int_\Omega \rho\,\,D^2\theta\,\nabla\gamma\,dx\,dy
- \int_{\partial \Omega} \bm S(\rho)\, \bm n \, ds.
\end{equation}
In view of our assumption, the boundary term vanishes. 
Hence, only the first term remains. Taking the $y$-component, we obtain
\[
{\mathcal P}_y'(z)
=
-\frac{1}{k_0^2}\int_\Omega \rho\,\nabla\gamma\cdot \nabla(\partial_y\theta)\,dx\,dy
=: -F_y(z).
\]
Thus,
\[
{\mathcal P}_y(z)={\mathcal P}_y(0)-\int_0^z F_y(s)\,ds.
\]
Substituting this into \eqref{eq:centroid_prime} and integrating from $0$ to $z$, we obtain
\[
y_c(z)
=
y_c(0)
+\frac{{\mathcal P}_y(0)}{M}\,z
-\frac{1}{M}\int_0^z \int_0^s F_y(\tau)\,d\tau\,ds
-\frac{1}{k_0 M}\int_0^z\left(\int_\Omega \rho\,\gamma_y\,dx\,dy\right)ds.
\label{2.25}
\]
This shows that, in general, the centroid is nonlinear in $z$.
\end{proof}

% ==========================================================
\section{Fully Specified Numerical Scheme}
\label{sec:discretization}

Next, we discuss the discretization and implementation of a numerical
scheme to solve \eqref{eq:FormI}. 

% ----------------------------------------------------------
\subsection{Grid, cells, and face indexing}
\label{subsec:grid}

All unknowns are discretized on a uniform Cartesian grid
$\{(x_i,y_j)\}_{i=1,j=1}^{N_x,N_y}$ covering $\Omega=(-L,L)^2$:
\[
  x_i = -L + \Big(i-\tfrac12\Big)\Delta x,\quad i=1,\dots,N_x,
  \qquad
  y_j = -L + \Big(j-\tfrac12\Big)\Delta y,\quad j=1,\dots,N_y,
\]
with $\Delta x = 2L/N_x$ and $\Delta y = 2L/N_y$.
We store cell-centered unknowns
$\rho_{i,j}\approx\rho(x_i,y_j,z)$, $\phi_{i,j}$, and $\gamma_{i,j}$.

For finite-volume fluxes, we use standard face notation.
The cell $(i,j)$ corresponds to
$[x_{i-\frac12},x_{i+\frac12}]\times[y_{j-\frac12},y_{j+\frac12}]$.
The face index $(i+\tfrac12,j)$ denotes the interface between cells
$(i,j)$ and $(i+1,j)$, which lies on the vertical line $x=x_{i+\frac12}$
(hence a \emph{vertical} face).
Analogously, $(i,j+\tfrac12)$ denotes a \emph{horizontal} face.

% ----------------------------------------------------------
\subsection{Wall (Neumann) boundary conditions via ghost cells}
\label{subsec:ghost}

Reflecting (homogeneous Neumann) boundary conditions for $\phi$ and $\gamma$
are imposed using ghost-cell reflection so that centered finite differences
can be applied up to the boundary.
For any grid function $u\in\R^{N_x\times N_y}$ we define an extended array
$u^e\in\R^{(N_x+2)\times(N_y+2)}$ by
\[
  u^e_{i+1,j+1} = u_{i,j},
  \qquad 1\le i\le N_x,\ 1\le j\le N_y,
\]
and mirror interior values into ghost cells, e.g.
\[
  u^e_{1,j+1} = u_{2,j},
  \qquad
  u^e_{N_x+2,j+1} = u_{N_x-1,j},
\]
(Equivalently, $u^e_{0,j}=u^e_{2,j}$ and $u^e_{N_x+1,j}=u^e_{N_x-1,j}$ in zero-based ghost notation.) 
and similarly in the $y$-direction.
This enforces $\partial_n u=0$ at $\partial\Omega$ in a discrete sense.

Centered first derivatives are then defined by
\begin{align*}
  (D_x^0 u)_{i,j}
  &:= \frac{u^e_{i+2,j+1}-u^e_{i,j+1}}{2\Delta x},\qquad 
  (D_y^0 u)_{i,j}
  := \frac{u^e_{i+1,j+2}-u^e_{i+1,j}}{2\Delta y},
\end{align*}
and the discrete Laplacian by
\[
  (\Delta_h u)_{i,j}
  :=
  \frac{u^e_{i+2,j+1}-2u^e_{i+1,j+1}+u^e_{i,j+1}}{\Delta x^2}
  +
  \frac{u^e_{i+1,j+2}-2u^e_{i+1,j+1}+u^e_{i+1,j}}{\Delta y^2}.
\]

% ----------------------------------------------------------
\subsection{Dispersive term $Q(\rho)$ and near-vacuum regularization}
\label{subsec:Q}

We approximate $Q(\rho)$ from \eqref{eq:Qdef} as follows.
First enforce a positivity floor $\rho\leftarrow\max(\rho,\rho_{\min})$.
Define $s_{i,j}=\sqrt{\rho_{i,j}}$ and compute $(\Delta_h s)_{i,j}$.
To avoid division by very small values we use
\[
  s^{\mathrm{den}}_{i,j}=\max(s_{i,j},s_{\min}),
  \qquad
  Q_{i,j}=\frac{(\Delta_h s)_{i,j}}{s^{\mathrm{den}}_{i,j}}.
\]
However, in our numerical experiments we set $\rho_{\rm min} = 10^{-20}\max\rho_0$ and 
$s_{\rm min} = 0$.

% ----------------------------------------------------------
\subsection{Conservative discretization of the intensity equation}
\label{subsec:rho}

The intensity equation \eqref{eq:rho}
is discretized in a conservative form
\[
  \partial_z\rho_{i,j} =
  -\frac{F^x_{i+\frac12,j}-F^x_{i-\frac12,j}}{\Delta x}
  -\frac{F^y_{i,j+\frac12}-F^y_{i,j-\frac12}}{\Delta y}.
\]

\paragraph{Rusanov (local Lax--Friedrichs) flux}
At each interior vertical face $(i+\tfrac12,j)$, define a face velocity by
averaging neighboring cell-centered velocities:
\[
  v^x_{i,j}:=\frac{1}{k_0}(D_x^0\phi)_{i,j},
  \qquad
  v^x_{i+\frac12,j}:=\tfrac12\big(v^x_{i,j}+v^x_{i+1,j}\big),
  \qquad
  a^x_{i+\frac12,j}:=\big|v^x_{i+\frac12,j}\big|,
\]
and the polarization gradient
\[
  \omega^x_{i,j}:=\frac{1}{k_0}(D_x^0\gamma)_{i,j},
  \qquad
  \omega^x_{i+\frac12,j}:=\tfrac12\big(\omega^x_{i,j}+\omega^x_{i+1,j}\big),
  \qquad
  b^x_{i+\frac12,j}:=\big|\omega^x_{i+\frac12,j}\big|.
\]
The Rusanov flux is
\[
  F^x_{i+\frac12,j}
  =
  \tfrac12\big(\rho_{i,j}+\rho_{i+1,j}\big)\left(v^x_{i+\frac12,j} + \omega^x_{i+\frac12,j} \right)
  -\tfrac12 \left( a^x_{i+\frac12,j} + b^x_{i+\frac12,j} \right)\big(\rho_{i+1,j}-\rho_{i,j}\big),
\]
and the $y$-flux is defined analogously.

\paragraph{No-flux boundary}
To enforce the reflecting wall condition, we set the boundary face fluxes to zero:
\[
  F^x_{\frac12,j}=F^x_{N_x+\frac12,j}=0,\qquad
  F^y_{i,\frac12}=F^y_{i,N_y+\frac12}=0,
\]
which enforces $\bm n\cdot(\rho\grad \theta)=0$ discretely and yields conservation of total mass up to $z$-integration error.

\paragraph{Positivity floor}
After each Runge--Kutta stage ($z-$ integrator, described below) we apply $\rho\leftarrow\max(\rho,\rho_{\min})$.

% ----------------------------------------------------------
\subsection{Monotone H-J discretization of the phase equation}
\label{subsec:phi}

The phase equation \eqref{eq:phi} is written in H-J form
\begin{equation} \label{eq:HJ_Hamiltonian}
\partial_z\phi + H(\nabla\phi)=S,
\qquad
H(p)=\frac{1}{2k_0}|p|^2,\quad
S=\frac{1}{k_0}\Big(\tfrac12 Q(\rho)-\tfrac12|\nabla\gamma|^2\Big).
\end{equation}
{Here, we write the Hamiltonian in terms of the transverse phase gradient $p = \nabla \phi = (p_x, p_y)$, where $p_x = \partial_x\phi$ and $p_y = \partial_y \phi.$} Nonlinear H-J equations may develop steep gradients even from smooth initial data; stable computation, therefore, requires a monotone scheme that converges to the viscosity solution \cite{crandal1984two, souganidis1985approximation}.

\paragraph{One-sided slopes}
Using ghost extensions, define
\[
  (D_x^+ \phi)_{i,j} = \frac{\phi^e_{i+2,j+1}-\phi^e_{i+1,j+1}}{\Delta x},
  \qquad
  (D_x^- \phi)_{i,j} = \frac{\phi^e_{i+1,j+1}-\phi^e_{i,j+1}}{\Delta x},
\]
and similarly $(D_y^\pm\phi)_{i,j}$.

\paragraph{Godunov Hamiltonian and LLF stabilization}
For the convex quadratic Hamiltonian $H(p)=\frac{1}{2k_0}(p_x^2+p_y^2)$, we use
the Godunov numerical Hamiltonian 
\begin{align}
&H_G(\phi)_{i,j} \label{eq:GodunovHamiltonia}\\
& \ =
\tfrac{1}{2k_0}\Big(
\max(D_x^-\phi_{i,j},0)^2 + \min(D_x^+\phi_{i,j},0)^2
+
\max(D_y^-\phi_{i,j},0)^2 + \min(D_y^+\phi_{i,j},0)^2
\Big). \nonumber
\end{align}
To improve robustness, we add local Lax--Friedrichs dissipation with coefficients chosen to bound the characteristic speeds:
\begin{equation} \label{eq:charspeed}
\alpha_x = \frac{1}{k_0}\max_{i,j}\left(|(D_x^-\phi)_{i,j}|, |(D_x^+\phi)_{i,j}|\right), \,\,\,\,\,
\alpha_y = \frac{1}{k_0}\max_{i,j} \left(|(D_y^-\phi)_{i,j}|, |(D_y^+\phi)_{i,j}|\right),
\end{equation}
and 
\begin{equation} \label{eq:diss}
\mathrm{diss}_{i,j}
=
\frac12\alpha_x\big((D_x^+\phi)_{i,j}-(D_x^-\phi)_{i,j}\big)
+
\frac12\alpha_y\big((D_y^+\phi)_{i,j}-(D_y^-\phi)_{i,j}\big).
\end{equation}
We then approximate $H(\nabla\phi)\approx H_G-\mathrm{diss}$.

\paragraph{Polarization forcing}
Since $\gamma$ satisfies a linear transport equation, its gradient remains
smooth in our setting.
We therefore compute
\[
  |\nabla\gamma|^2_{i,j} = (D_x^0\gamma_{i,j})^2 + (D_y^0\gamma_{i,j})^2,
\]
using centered differences.

\paragraph{Discrete phase RHS}
The semi-discrete phase update is
\[
  (\partial_z\phi)_{i,j}
  =
  \frac{1}{k_0}\Big(\tfrac12 Q_{i,j} - \tfrac12 |\nabla\gamma|^2_{i,j}\Big)
  - \big(H_G(\phi)_{i,j}-\mathrm{diss}_{i,j}\big).
\]

% ----------------------------------------------------------
\subsection{Upwind discretization of the $\gamma$-equation}
\label{subsec:gamma}

The polarization equation is
\[
  \partial_z\gamma + \bm v\cdot\nabla\gamma=0,
  \qquad
  \bm v=\frac{1}{k_0}\nabla\phi.
\]
We discretize it by directional upwinding:
\[
  (\partial_z\gamma)_{i,j} =
  -(v_x)_{i,j}(D_x^{\rm up}\gamma)_{i,j}
  -(v_y)_{i,j}(D_y^{\rm up}\gamma)_{i,j},
\]
with
\[
  (D_x^{\rm up}\gamma)_{i,j} =
  \begin{cases}
    (D_x^-\gamma)_{i,j}, & (v_x)_{i,j}\ge 0,\\
    (D_x^+\gamma)_{i,j}, & (v_x)_{i,j}< 0,
  \end{cases}
\]
and analogously for $D_y^{\rm up}$.

% ----------------------------------------------------------
\subsection{SSP-RK3 $z$ stepping and CFL control}
\label{subsec:time}

Let $U=(\rho,\phi,\gamma)$ and let $\mathcal{R}(U)$ denote the semi-discrete
right-hand side defined by the operators above.
We advance the discrete form of \eqref{eq:FormI} using the third-order SSP-RK3 scheme:
\begin{equation}\label{eq:SSP-RK3}
\begin{aligned}
  U^{(1)} &= U^n + \Delta z\,\mathcal{R}(U^n),\\
  U^{(2)} &= \frac34 U^n + \frac14\Big(U^{(1)} + \Delta z\,\mathcal{R}(U^{(1)})\Big),\\
  U^{n+1} &= \frac13 U^n + \frac23\Big(U^{(2)} + \Delta z\,\mathcal{R}(U^{(2)})\Big).
\end{aligned}
\end{equation}

We additionally enforce 
\[
  \phi \leftarrow \phi - \mathrm{mean}(\phi),
\]
after each stage, which fixes the additive gauge invariance of $\phi$ and improves
numerical conditioning without affecting $\nabla\phi$.

\paragraph{CFL step size} Since all three updates are explicit, we chose $\Delta z$ to satisfy a Courant--Friedrichs--Lewy (CFL) restriction based on the maximum characteristic transport speeds. 

Using the wave-speed and polarization  parameters $a^x_{i+\frac12,j},\,b^x_{i+\frac12,j}$ and $a^y_{i,j+\frac12},\, b^y_{i,j+\frac12}$ defined in Section~\ref{subsec:rho}, we set
\[
\Delta z
=
\min\bigg\{
\Delta z_{\max},\ 
\mathrm{CFL}\cdot\bigg(
\frac{\max_{i,j} \big(a^x_{i+\frac12,j} + b^x_{i+\frac12,j} \big)}{\Delta x}
+
\frac{\max_{i,j} \big(a^y_{i,j+\frac12} + b^y_{i,j+\frac12}\big) }{\Delta y}
\bigg)^{-1}
\bigg\},
\]
where $\mathrm{CFL}>0$.

% ==========================================================
\section{Structure properties of the discrete scheme}
\label{sec:discrete_results}

We state two key properties of the fully discrete intensity update:
(i) exact conservation of discrete mass for each forward Euler stage, and
(ii) positivity preservation under a standard CFL condition for the FV/Rusanov
flux. These properties are inherited by SSP-RK3 because SSP-RK3 is a convex
combination of forward Euler steps.

% ----------------------------------------------------------
\begin{proposition}[Exact discrete mass conservation]\label{prop:mass_discrete}
Consider a forward Euler FV update for the intensity
\[
\rho^{+}_{i,j}
=\rho_{i,j}
-\Delta z\bigg(
\frac{F^x_{i+\frac12,j}-F^x_{i-\frac12,j}}{\Delta x}
+\frac{F^y_{i,j+\frac12}-F^y_{i,j-\frac12}}{\Delta y}
\bigg),
\]
with boundary face fluxes set to zero:
\[
F^x_{\frac12,j}=F^x_{N_x+\frac12,j}=0,\qquad
F^y_{i,\frac12}=F^y_{i,N_y+\frac12}=0.
\]
Then the discrete mass at step $n$ 
$
M_h(\rho):=\sum_{i=1}^{N_x}\sum_{j=1}^{N_y}\rho_{i,j}\,\Delta x\,\Delta y,
$
is preserved exactly: $M_h(\rho^+)=M_h(\rho)$.

\end{proposition}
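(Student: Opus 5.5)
The plan is a direct discrete summation-by-parts (telescoping) argument. Multiply the forward Euler update by the cell area $\Delta x\,\Delta y$ and sum over all cells $1\le i\le N_x$, $1\le j\le N_y$. This gives
\[
M_h(\rho^+) = M_h(\rho) - \Delta z\,\Delta y\sum_{j=1}^{N_y}\sum_{i=1}^{N_x}\big(F^x_{i+\frac12,j}-F^x_{i-\frac12,j}\big)
- \Delta z\,\Delta x\sum_{i=1}^{N_x}\sum_{j=1}^{N_y}\big(F^y_{i,j+\frac12}-F^y_{i,j-\frac12}\big),
\]
where the factors $\Delta x$ and $\Delta y$ in the denominators have cancelled against the area weight. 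The task is then to show that both double sums vanish.

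For the $x$-fluxes, fix $j$ and observe that the inner sum over $i$ telescopes. Every interior face $(i+\tfrac12,j)$ with $1\le i\le N_x-1$ appears exactly twice: once with a plus sign, as the right face of cell $(i,j)$, and once with a minus sign, as the left face of cell $(i+1,j)$. This uses the fact that the Rusanov flux $F^x_{i+\frac12,j}$ is a single-valued function of the face. It is defined symmetrically from the data of cells $(i,j)$ and $(i+1,j)$ (averaged velocities, $|\cdot|$ of averaged velocities, and the jump $\rho_{i+1,j}-\rho_{i,j}$), so the same number is used by both adjacent cells. The telescoped sum therefore reduces to $F^x_{N_x+\frac12,j}-F^x_{\frac12,j}$, which is zero by the imposed boundary fluxes. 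The $y$-direction is handled identically: fix $i$, telescope in $j$, and use $F^y_{i,N_y+\frac12}=F^y_{i,\frac12}=0$.

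There is no genuine obstacle here. The only point needing care, which I would state explicitly, is the single-valuedness of the numerical flux at each face; this is exactly what the conservative form of Section~\ref{subsec:rho} guarantees. Notably, the argument uses nothing about the specific values of $\phi$ or $\gamma$, nor the ghost-cell treatment, since the boundary fluxes are overwritten with zero.

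Two caveats belong in a remark rather than in the proof. First, the positivity floor $\rho\leftarrow\max(\rho,\rho_{\min})$ is a separate post-processing step that is not part of the update considered here, and it could alter mass. Second, the SSP-RK3 stages are convex combinations of forward Euler steps, so mass conservation carries over stage by stage.
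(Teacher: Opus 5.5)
Your proof is correct and takes the same route as the paper: multiply by the cell area, sum over all cells, and telescope each flux difference in its own direction, so that only the boundary fluxes remain and these are zero by assumption. Your explicit note that each face flux is single-valued, and your caveats about the positivity floor and the SSP-RK3 stages, agree with the paper's own remarks.
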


\begin{proof}
Sum the update over all $(i,j)$ and multiply by $\Delta x\Delta y$:
\[
M_h(\rho^+)-M_h(\rho)
= -\Delta z\,\Delta y\sum_{j=1}^{N_y}\sum_{i=1}^{N_x}(F^x_{i+\frac12,j}-F^x_{i-\frac12,j})
-\Delta z\,\Delta x\sum_{i=1}^{N_x}\sum_{j=1}^{N_y}(F^y_{i,j+\frac12}-F^y_{i,j-\frac12}).
\]
Each double sum telescopes in its respective direction, leaving only boundary
fluxes, which vanish by assumption. Hence, the difference is zero.
\end{proof}
Next, we study the discrete centroid evolution. Define the centroid at step $n$ by
\[
y_{c,h} := \frac{1}{M_h}\sum_{i,j} y_j\,\rho_{i,j}\Delta x\Delta y,
\]
and similarly for $x_{c,h}$.

\begin{proposition}[Discrete centroid increment identity]
\label{prop:centroid_increment} Assume $M_h >0$.
For the Euler FV step
$
\rho^{+}_{i,j}
=\rho_{i,j}
-\Delta z\left(
\frac{F^x_{i+\frac12,j}-F^x_{i-\frac12,j}}{\Delta x}
+\frac{F^y_{i,j+\frac12}-F^y_{i,j-\frac12}}{\Delta y}
\right),
$
the centroid satisfies
$
y_{c,h}^{+}-y_{c,h}
=
\frac{\Delta z}{M_h}\sum_{i=1}^{N_x}\sum_{j=1}^{N_y-1}
F^y_{i,j+\frac12}\,\Delta x \Delta y,
$
with the convention that boundary face fluxes vanish. An analogous formula holds
for $x_{c,h}$ with $F^x$.
\end{proposition}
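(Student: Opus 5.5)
Since $M_h(\rho^+)=M_h$ by Proposition~\ref{prop:mass_discrete}, the centroid increment reduces to
\[
y_{c,h}^{+}-y_{c,h}=\frac{1}{M_h}\sum_{i,j} y_j\,(\rho^+_{i,j}-\rho_{i,j})\,\Delta x\Delta y .
\]
The plan is to substitute the Euler update, handle the $x$-flux and $y$-flux contributions separately, and then apply summation by parts, the discrete analogue of the integration by parts in Lemma~\ref{lem:centroid_identity}.

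\textbf{The $x$-fluxes contribute nothing.} For fixed $j$, the weight $y_j$ does not depend on $i$. The inner sum $\sum_i (F^x_{i+\frac12,j}-F^x_{i-\frac12,j})$ therefore telescopes to the two boundary fluxes, and these vanish.

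\textbf{The $y$-fluxes.} The remaining term is
\[
-\frac{\Delta z\,\Delta x\Delta y}{M_h\,\Delta y}\sum_i\sum_{j=1}^{N_y} y_j\big(F^y_{i,j+\frac12}-F^y_{i,j-\frac12}\big).
\]
Summation by parts (Abel) in $j$ gives
\[
\sum_{j=1}^{N_y} y_j\big(F^y_{i,j+\frac12}-F^y_{i,j-\frac12}\big)
= y_{N_y}F^y_{i,N_y+\frac12}-y_1F^y_{i,\frac12}-\sum_{j=1}^{N_y-1}(y_{j+1}-y_j)\,F^y_{i,j+\frac12}.
\]
The boundary terms vanish by the no-flux convention. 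On the uniform grid $y_{j+1}-y_j=\Delta y$, so the sum equals $-\Delta y\sum_{j=1}^{N_y-1}F^y_{i,j+\frac12}$. Multiplying by the prefactor $-\Delta z\,\Delta x/M_h$ gives exactly
\[
\frac{\Delta z}{M_h}\sum_i\sum_{j=1}^{N_y-1}F^y_{i,j+\frac12}\,\Delta x\Delta y .
\]
The $x_{c,h}$ formula follows by the same argument with the roles of $i$ and $j$ exchanged.

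The only delicate step is the index bookkeeping in the summation by parts. One has to check the endpoint terms and confirm that only the interior faces $j+\tfrac12$ with $1\le j\le N_y-1$ survive, each with the uniform spacing $\Delta y$ as its weight.
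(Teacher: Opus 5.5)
Your proof is correct and follows the paper's argument essentially line by line: multiply the update by $y_j$, let the $x$-fluxes telescope to vanishing boundary terms, and apply summation by parts in $j$ with $y_{j+1}-y_j=\Delta y$. You also explicitly invoke Proposition~\ref{prop:mass_discrete} to justify that the denominator $M_h$ is unchanged by the step, which the paper uses only implicitly when it divides by $M_h$.
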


\begin{proof}
Multiply the FV update by $y_j$ and sum over $(i,j)$:
\[
\begin{aligned}
&\sum_{i,j} y_j(\rho^{+}_{i,j}-\rho_{i,j})\Delta x\Delta y\\
&\quad=
-\Delta z\sum_{i,j} y_j
\left(\frac{F^x_{i+\frac12,j}-F^x_{i-\frac12,j}}{\Delta x}\right)\Delta x\Delta y
-\Delta z\sum_{i,j} y_j
\left(\frac{F^y_{i,j+\frac12}-F^y_{i,j-\frac12}}{\Delta y}\right)\Delta x\Delta y.
\end{aligned}
\]
The $x$-flux sum telescopes to zero because $y_j$ is constant in $i$ and boundary
fluxes are zero. For the $y$-flux term, use summation by parts in $j$:
\[
-\sum_{j=1}^{N_y} y_j\frac{F_{j+\frac12}-F_{j-\frac12}}{\Delta y}\Delta y
=
\sum_{j=1}^{N_y-1}(y_{j+1}-y_j)F_{j+\frac12},
\]
and note $y_{j+1}-y_j=\Delta y$ for the uniform grid. This yields
\[
\sum_{i,j} y_j(\rho^{+}_{i,j}-\rho_{i,j})\Delta x\Delta y
=
\Delta z\sum_{i=1}^{N_x}\sum_{j=1}^{N_y-1}F^y_{i,j+\frac12}\,\Delta x \Delta y,
\]
because boundary contributions vanish by $F^y_{i,\frac12}=F^y_{i,N_y+\frac12}=0$.
Finally, divide by $M_h$ to obtain the centroid increment.
\end{proof}

% ----------------------------------------------------------
\begin{proposition}[Positivity of the FV/Rusanov intensity update]\label{prop:positivity}
Assume $\rho_{i,j}\ge 0$ for all cells, and that the FV fluxes are Rusanov (local LF) fluxes with wave-speed parameters chosen as
$
a^x_{i+\frac12,j} := |v^x_{i+\frac12,j}|,\qquad
a^y_{i,j+\frac12} := |v^y_{i,j+\frac12}|,
$
and the polarization gradient parameter 
$
b^x_{i+\frac12,j} := |\omega^x_{i+\frac12,j}|,\qquad
b^y_{i,j+\frac12} := |\omega^y_{i,j+\frac12}|.
$
If the $z-$step satisfies the global CFL condition
\begin{equation} \label{eq:CFL}
\Delta z\left(
\frac{\max_{i,j} c^x_{i+\frac12,j}}{\Delta x}
+
\frac{\max_{i,j} c^y_{i,j+\frac12}}{\Delta y}
\right)\le \frac{1}{2},
\end{equation}
where $c^x_{i+\frac12,j} = a^x_{i+\frac12,j}+b^x_{i+\frac12,j}$ ($y-$component similarily), then the forward Euler update yields $\rho^{+}_{i,j}\ge 0$ for all cells.
Consequently, SSP-RK3 preserves nonnegativity under the same CFL condition.
\end{proposition}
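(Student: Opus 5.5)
The approach is the standard one: write the forward Euler update at cell $(i,j)$ as a linear combination of $\rho_{i,j}$ and its four neighbours, and show that every coefficient is nonnegative under \eqref{eq:CFL}. Set $u_{i+\frac12,j}:=v^x_{i+\frac12,j}+\omega^x_{i+\frac12,j}$, so that $|u_{i+\frac12,j}|\le c^x_{i+\frac12,j}$ by the triangle inequality. The Rusanov flux can then be written as
\[
F^x_{i+\frac12,j}=\tfrac12\big(u_{i+\frac12,j}+c^x_{i+\frac12,j}\big)\rho_{i,j}+\tfrac12\big(u_{i+\frac12,j}-c^x_{i+\frac12,j}\big)\rho_{i+1,j}.
\]
Both coefficient combinations are controlled: $u+c\ge 0$ and $u-c\le 0$. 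The $y$-flux is treated the same way.

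Substituting into $-(F^x_{i+\frac12,j}-F^x_{i-\frac12,j})/\Delta x$ gives the following contributions.
\begin{itemize}
\item The coefficient of $\rho_{i+1,j}$ is $\frac{1}{2\Delta x}\big(c^x_{i+\frac12,j}-u_{i+\frac12,j}\big)\ge 0$.
\item The coefficient of $\rho_{i-1,j}$ is $\frac{1}{2\Delta x}\big(c^x_{i-\frac12,j}+u_{i-\frac12,j}\big)\ge 0$.
\item The contribution to the coefficient of $\rho_{i,j}$ is $-\frac{1}{2\Delta x}\big(c^x_{i+\frac12,j}+u_{i+\frac12,j}+c^x_{i-\frac12,j}-u_{i-\frac12,j}\big)$, which is bounded below by $-\frac{2}{\Delta x}\max c^x$. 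This uses $|u|\le c$, so each bracketed pair is at most $2c$.
\end{itemize}
Since every neighbour coefficient is multiplied by $\Delta z>0$, all neighbour coefficients in $\rho^+_{i,j}$ are nonnegative. The diagonal coefficient is at least
\[
1-2\Delta z\Big(\frac{\max c^x}{\Delta x}+\frac{\max c^y}{\Delta y}\Big),
\]
which is nonnegative exactly under \eqref{eq:CFL}; the constant $\tfrac12$ there is precisely what this crude bound needs.

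For boundary cells the relevant face flux is zero, so the corresponding neighbour term disappears and the diagonal coefficient only gains. The bound therefore still holds. With $\rho_{i,j}\ge 0$ for all cells, $\rho^+_{i,j}$ is a nonnegative combination of nonnegative values, hence $\rho^+_{i,j}\ge 0$.

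For SSP-RK3, each stage in \eqref{eq:SSP-RK3} is a convex combination of $U^n$ and forward Euler steps applied to earlier stages, which are nonnegative by induction. This gives nonnegativity provided \eqref{eq:CFL} holds with the wave speeds evaluated at each stage. I will state this as the interpretation of ``the same CFL condition'', and note that the floor $\rho\leftarrow\max(\rho,\rho_{\min})$ only helps.

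The main subtlety is bookkeeping rather than analysis. The face quantities $a,b$ are computed from centered differences of $\phi,\gamma$ and are independent of $\rho$, so the update is genuinely linear in $\rho$ at fixed stage. The triangle inequality $|v+\omega|\le a+b$ is the one place the specific choice $c=a+b$ matters. The stage-wise CFL dependence is the only point where I expect to need care in the statement.
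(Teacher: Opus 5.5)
Your proof is correct and follows essentially the same route as the paper. Both arguments rewrite the Rusanov update as a nonnegative combination of $\rho_{i,j}$ and its four neighbours, with neighbour weights nonnegative because $|v+\omega|\le a+b=c$, and with the central weight bounded below by $1-2\Delta z(\max c^x/\Delta x+\max c^y/\Delta y)\ge 0$ under the CFL condition with constant $\tfrac12$. Positivity then passes to SSP-RK3 through convex combinations of forward Euler steps. Two points you make explicit are only implicit in the paper's argument: that boundary cells only gain on the diagonal because the boundary face fluxes vanish, and that the CFL bound must hold with the face speeds evaluated at each stage $U^{(1)},U^{(2)}$.
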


\begin{proof}
A fully discrete version of intensity equation writes:
\[
\frac{\rho^{+}_{i,j} - \rho_{i,j}}{\Delta z} = -\frac{F^x_{i+\frac12,j}-F^x_{i-\frac12,j}}{\Delta x} - \frac{F^y_{i,j+\frac12}-F^y_{i,j-\frac12}}{\Delta y}.
\]
Incorporating the form of the Rusanov flux, we arrive at
\[
\begin{aligned}
    \rho^{+}_{i,j} = \rho_{i,j} - \frac{\Delta z}{2\Delta x}\left((\rho_{i,j} + \rho_{i+1,j})(v^x_{i+\frac12,j} + \omega^x_{i+\frac12,j})- c^x_{i+\frac12,j} (\rho_{i+1,j} - \rho_{i,j}) \right) \\
    + \frac{\Delta z}{2\Delta x} \left((\rho_{i-1,j} + \rho_{i,j} )(v^x_{i-\frac12,j} + \omega^x_{i-\frac12,j}) - c^x_{i-\frac12,j} (\rho_{i,j} - \rho_{i-1,j}) \right) \\
   - \frac{\Delta z}{2\Delta y} \left((\rho_{i,j} + \rho_{i,j+1})(v^y_{i,j+\frac12} + \omega^y_{i,j+\frac12}) - c^y_{i,j+\frac12}(\rho_{i,j+1} - \rho_{i,j}) \right) \\
   +\frac{\Delta z}{2\Delta y} \left((\rho_{i,j-1} + \rho_{i,j})(v^y_{i,j-\frac12} + \omega^y_{i,j-\frac12})  - c^y_{i,j-\frac12}(\rho_{i,j} - \rho_{i,j-1}) \right).
\end{aligned}
\]
Since $c^x_{i+\frac12,j} = |v^x_{i+\frac12,j}| + |\omega^x_{i+\frac12,j}|$, this yields
\begin{subequations}\label{wavesp12}
\begin{align}
\frac{v^x_{i+\frac12,j} + \omega^x_{i+\frac12,j} + 
c^x_{i+\frac12,j}}{2} \leq \frac{|v^x_{i+\frac12,j}| + |\omega^x_{i+\frac12,j}| + 
c^x_{i+\frac12,j}}{2} \leq c^x_{i+\frac12,j}, \\
\text{and }\,\,\,\, \frac{c^x_{i+\frac12,j} - v^x_{i+\frac12,j} - \omega^x_{i+\frac12,j}}{2} 
\leq  \frac{a^x_{i+\frac12,j} + |v^x_{i+\frac12,j}| + |\omega^x_{i+\frac12,j}|}{2} \leq c^x_{i+\frac12,j}. 
\end{align}
\end{subequations}

Further rearranging the discretized intensity equation, we arrive at
\[
\begin{aligned}
\hspace{-0.5cm}    \rho^{+}_{i,j} = \rho_{i,j}\;-\;\rho_{i,j} \frac{\Delta z}{\Delta x}\left(\frac{v^x_{i+\frac12,j} + \omega^x_{i+\frac12,j} + c^x_{i+\frac12,j}} {2} \right) - \rho_{i,j} \frac{\Delta z}{\Delta x}\left(\frac{c^x_{i-\frac12,j} - v^x_{i-\frac12,j} - \omega^x_{i-\frac12,j}} {2} \right)\\
    - \rho_{i,j} \frac{\Delta z}{\Delta y}\left(\frac{v^y_{i,j+\frac12} + \omega^y_{i,j+\frac12} + c^y_{i,j+\frac12}} {2} \right) 
    - \rho_{i,j} \frac{\Delta z}{\Delta y}\left(\frac{c^y_{i,j-\frac12} - v^y_{i,j-\frac12} - \omega^y_{i,j-\frac12}} {2} \right) \\ 
    + \rho_{i+1,j} \underbrace{\frac{\Delta z}{\Delta x} \left(\frac{c^x_{i+\frac12,j} - v^x_{i+\frac12,j} - \omega^x_{i+\frac12,j}} {2} \right)}_{\textcolor{black}{w_1}} + \rho_{i-1,j} \underbrace{\frac{\Delta z}{\Delta x}\left(\frac{c^x_{i-\frac12,j} + v^x_{i-\frac12,j} + \omega^x_{i-\frac12,j}} {2}  \right)}_{\textcolor{black}{w_2}} \\
    + \rho_{i,j+1} \underbrace{\frac{\Delta z}{\Delta y}\left(\frac{c^y_{i,j+\frac12} - v^y_{i,j+\frac12} - \omega^y_{i,j+\frac12}} {2} \right)}_{\textcolor{black}{w_3}} + \rho_{i,j-1} \underbrace{\frac{\Delta z}{\Delta y}\left(\frac{c^y_{i,j-\frac12} + v^y_{i,j-\frac12} + \omega^y_{i,j-\frac12}} {2} \right)}_{\textcolor{black}{w_4}},
\end{aligned}
\]
\begin{equation}
\implies \rho^{+}_{i,j} =  w_0  \rho_{i,j} + w_1 \rho_{i+1,j} +  w_2 \rho_{i-1,j} + w_3 \rho_{i,j+1} + w_4\rho_{i,j-1}, \label{intensitydes1}
\end{equation} 
where 
\begin{align} \label{center_Weight}
w_0 = \frac12 - \frac{\Delta z}{\Delta x}\left(\frac{v^x_{i+\frac12,j} + \omega^x_{i+\frac12,j} + c^x_{i+\frac12,j}} {2} \right) - \frac{\Delta z}{\Delta y}\left(\frac{v^y_{i,j+\frac12} + \omega^y_{i,j+\frac12} + c^y_{i,j+\frac12}} {2} \right)\\
\,\,\,\,\,\,\, + \frac12 -  \frac{\Delta z}{\Delta x}\left(\frac{c^x_{i-\frac12,j} - v^x_{i-\frac12,j} - \omega^x_{i-\frac12,j}} {2} \right) -\frac{\Delta z}{\Delta y}\left(\frac{c^y_{i,j-\frac12} - v^y_{i,j-\frac12} - \omega^y_{i,j-\frac12}} {2} \right). \nonumber
 \end{align}
By defining the maxima of equation \eqref{wavesp12} over all vertical/horizontal faces, we can write
\[\begin{aligned}
\max_{i,j}\frac{v^x_{i+\frac12,j} + \omega^x_{i+\frac12,j} + 
c^x_{i+\frac12,j}}{2\Delta x}  \leq \max_{i,j} \frac{c^x_{i+\frac12,j}}{\Delta x},\\ 
\max_{i,j}\frac{v^y_{i,j+\frac12} + \omega^y_{i,j+\frac12} + 
c^y_{i,j+\frac12}}{2\Delta y}  \leq \max_{i,j} \frac{c^y_{i,j+\frac12}}{\Delta y}, \\
\max_{i,j}\frac{c^x_{i+\frac12,j} - v^x_{i+\frac12,j} - \omega^x_{i+\frac12,j}}{2\Delta x} 
\leq  \max_{i,j}\frac{c^x_{i+\frac12,j}}{\Delta x},\\
\max_{i,j}\frac{ 
c^y_{i,j+\frac12} - v^y_{i,j+\frac12} - \omega^y_{i,j+\frac12}}{2\Delta y}  \leq \max_{i,j} \frac{c^y_{i,j+\frac12}}{\Delta y} . \label{wavesp1}
\end{aligned}\]
Consequently, the global CFL condition \eqref{eq:CFL} implies the two bounds below:
\[\begin{aligned}
 0 \leq \frac12 - \Delta z \left(\max_{i,j}\frac{v^x_{i+\frac12,j} + \omega^x_{i+\frac12,j} + c^x_{i+\frac12,j}} {2\Delta x} + \max_{i,j} \frac{v^y_{i,j+\frac12} + \omega^y_{i,j+\frac12} + c^y_{i,j+\frac12}} {2\Delta y}\right),
\end{aligned}
\]
and
\[\begin{aligned}
0 \leq \frac12 - \Delta z \left(\max_{i,j}\frac{ c^x_{i-\frac12,j} - v^x_{i-\frac12,j} - \omega^x_{i-\frac12,j} } {2\Delta x} + \max_{i,j} \frac{c^y_{i,j-\frac12} - v^y_{i,j-\frac12} - \omega^y_{i,j-\frac12}} {2\Delta y}\right). 
\end{aligned}
\]
Summing the above two inequalities and using \eqref{center_Weight} yields $w_0 \geq 0$. Thus, we can see that expanding the Rusanov fluxes gives a forward Euler update of the form $\rho^+ = \sum_{\ell = 0}^{4} w_\ell \rho_\ell$, where the neighbor weights ($w_1, w_2, w_3, w_4$) are nonnegative with respect to the wave speed choice \eqref{wavesp12}, and the CFL condition guaranties that the remaining central weight $w_0$ is also nonnegative. Therefore, $\rho^+\ge 0$ whenever $\rho\ge 0$, since $\rho^+$ is a linear combination of nonnegative coefficients. Positivity preservation of finite volume schemes under suitable CFL restrictions is classical; see \cite{perthame1996positivity}. In our setting, the above argument yields the sufficient global positivity condition CFL$\,\leq \tfrac12$. Furthermore,
SSP-RK3 is a convex combination of forward Euler steps \cite{gottlieb2001strong}, so it preserves
positivity as well.
\end{proof}

\begin{proposition}[Monotonicity of the Godunov--LLF numerical Hamiltonian]
\label{prop:monotone_H}
Define the numerical Hamiltonian by
\[
H_{G\!L}(\phi)_{i,j}:=H_G(\phi)_{i,j}-\mathrm{diss}_{i,j},
\]
where $H_G$ is the Godunov numerical Hamiltonian \eqref{eq:GodunovHamiltonia} and $\mathrm{diss}_{i,j}$ is the LLF dissipation term \eqref{eq:diss}. Writing $H_{G\!L}(\phi)_{i,j} = \hat{H}(p_x^-, p_y^-, p_x^+, p_y^+)$ in terms of the one-sided slopes $p_x^\pm = (D_x^\pm \phi)_{i,j}, \,\, p_y^\pm = (D_y^\pm \phi){i,j}$, $\hat{H}$ is nondecreasing in $p_x^-, p_y^-$ and nonincreasing in $p_x^+, p_y^+$. Consequently, the forward Euler update for the H-J part:
\[
\phi^{n+1}_{i,j}=\phi^n_{i,j}-\Delta z\,H_{G\!L}(\phi^n)_{i,j} + \Delta z\,S^n_{i,j},
\]
where $S_{i,j}^n=S(\rho^n,\gamma^n)_{i,j}$ is defined by \eqref{eq:HJ_Hamiltonian} (and is independent of $\phi$); the update is monotone in the one-sided slopes whenever $\Delta z$ satisfies the corresponding CFL restriction.
\end{proposition}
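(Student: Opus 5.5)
The plan is a direct coordinatewise computation on the explicit formula for $\hat H$. We treat the LLF coefficients $\alpha_x,\alpha_y$ from \eqref{eq:charspeed} as frozen constants during a given forward Euler stage, which is the standard convention for global Lax--Friedrichs splittings. With this convention, combining \eqref{eq:GodunovHamiltonia} and \eqref{eq:diss} gives
\[
\hat H(p_x^-,p_y^-,p_x^+,p_y^+)=\sum_{\xi\in\{x,y\}}\Big[\tfrac{1}{2k_0}\big(\max(p_\xi^-,0)^2+\min(p_\xi^+,0)^2\big)-\tfrac{\alpha_\xi}{2}\big(p_\xi^+-p_\xi^-\big)\Big].
\]
This is a sum of one-variable functions, each of which is $C^1$: $t\mapsto\max(t,0)^2$ and $t\mapsto\min(t,0)^2$ are continuously differentiable. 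So the monotonicity claims reduce to sign checks of partial derivatives:
\[
\partial_{p_\xi^-}\hat H=\tfrac1{k_0}\max(p_\xi^-,0)+\tfrac{\alpha_\xi}{2}\ge0,\qquad
\partial_{p_\xi^+}\hat H=\tfrac1{k_0}\min(p_\xi^+,0)-\tfrac{\alpha_\xi}{2}\le0 .
\]
Both signs hold unconditionally, since $\alpha_\xi\ge0$. Integrating these along segments gives the nondecreasing and nonincreasing statements.

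For the second claim, we view $\phi^{n+1}_{i,j}$ as a function $G$ of the five stencil values $\phi_{i,j},\phi_{i\pm1,j},\phi_{i,j\pm1}$. Here $S^n_{i,j}$ does not depend on $\phi$, and ghost values are copies of interior values, so the argument is unchanged at the boundary. By the chain rule, $p_x^+=(\phi_{i+1,j}-\phi_{i,j})/\Delta x$ and $p_x^-=(\phi_{i,j}-\phi_{i-1,j})/\Delta x$ give
\[
\frac{\partial G}{\partial\phi_{i+1,j}}=-\frac{\Delta z}{\Delta x}\partial_{p_x^+}\hat H\ge0,\qquad \frac{\partial G}{\partial\phi_{i-1,j}}=\frac{\Delta z}{\Delta x}\partial_{p_x^-}\hat H\ge0,
\]
and analogously in $y$. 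For the center value,
\[
\frac{\partial G}{\partial\phi_{i,j}}=1-\sum_{\xi}\frac{\Delta z}{\Delta \xi}\Big(\tfrac1{k_0}\max(p_\xi^-,0)-\tfrac1{k_0}\min(p_\xi^+,0)+\alpha_\xi\Big).
\]
By the definition \eqref{eq:charspeed}, $|p_\xi^\pm|/k_0\le\alpha_\xi$. Hence the bracket is at most $2\alpha_\xi$, and the center coefficient is nonnegative under the CFL restriction
\[
\Delta z\Big(\frac{\alpha_x}{\Delta x}+\frac{\alpha_y}{\Delta y}\Big)\le\frac12 .
\]
This restriction is the analogue of \eqref{eq:CFL}. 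All partial derivatives of $G$ are then nonnegative, so $G$ is nondecreasing in each argument, i.e. the scheme is monotone in the Crandall--Lions sense.

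The main subtlety is that $\alpha_x,\alpha_y$ are global maxima depending on $\phi$ itself. Taken literally, the map $\phi\mapsto\hat H$ is therefore not a purely local function of the four slopes. I will handle this by stating explicitly that monotonicity is understood with $\alpha_\xi$ frozen at the stage values. As a fallback, one can replace $\alpha_\xi$ by any upper bound on $\max|p_\xi^\pm|/k_0$ over the relevant range of slopes; the sign computations above only use $\alpha_\xi\ge|p_\xi^\pm|/k_0$. Finally, SSP-RK3 is a convex combination of forward Euler steps, so the per-stage monotonicity carries over stagewise.
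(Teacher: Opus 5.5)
\textbf{The gap: the center-coefficient bound.} Your overall strategy works, but the center-coefficient estimate contains an error that invalidates the CFL constant you state. In the bracket
\[
\tfrac1{k_0}\max(p_\xi^-,0)-\tfrac1{k_0}\min(p_\xi^+,0)+\alpha_\xi
\]
the first two terms can each equal $\alpha_\xi$ at the same time. At a local maximum of $\phi$ in direction $\xi$ one has $p_\xi^->0>p_\xi^+$. Nothing prevents $|p_\xi^\pm|=k_0\alpha_\xi$ there, since $\alpha_\xi$ is exactly the maximum of these slopes over the grid. The bracket is then $3\alpha_\xi$, not at most $2\alpha_\xi$.

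At a peak of this type in both directions, your restriction $\Delta z(\alpha_x/\Delta x+\alpha_y/\Delta y)=\tfrac12$ gives $\partial G/\partial\phi_{i,j}=1-\tfrac32<0$. So monotonicity genuinely fails. The bound $2\alpha_\xi$ is what one gets for the true Godunov Hamiltonian $\tfrac1{2k_0}\max\big(\max(p_\xi^-,0)^2,\min(p_\xi^+,0)^2\big)$, where only one one-sided term is active. But \eqref{eq:GodunovHamiltonia}, the formula you differentiated, is the sum (Engquist--Osher) form.

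The repair is immediate. Your computation yields monotonicity under $\Delta z(\alpha_x/\Delta x+\alpha_y/\Delta y)\le\tfrac13$. Since the statement only asks for ``the corresponding CFL restriction'', the conclusion holds with this constant.

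\textbf{Comparison with the paper.} With that correction, your argument is a genuinely different and more complete route than the paper's. The paper's sketch does three things: it cites Osher--Shu for monotonicity of $H_G$; it asserts that the LLF term keeps $H_G-\mathrm{diss}$ monotone because $\alpha_x,\alpha_y$ dominate the characteristic speeds; and it invokes ``standard'' monotone-scheme arguments for the Euler step without an explicit CFL.

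Your direct differentiation is self-contained and shows more precisely where each ingredient enters. The sign conditions on $\partial\hat H$ hold for every frozen $\alpha_\xi\ge0$; the LLF term is itself a monotone viscosity and is not needed for them. The domination $\alpha_\xi\ge|p_\xi^\pm|/k_0$ is used only to bound the center coefficient.

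You also correctly identify a point the paper passes over. Because $\alpha_x,\alpha_y$ are global maxima depending on $\phi$, $H_{G\!L}$ is not literally a function of the four local slopes. Monotonicity therefore has to be understood with frozen coefficients, on grid functions whose slopes are bounded by $k_0\alpha_\xi$.

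Your closing SSP-RK3 sentence goes beyond the statement and is not literally true for the coupled solver. Later-stage $S$ depends on $\phi^n$ through $\rho$ and $\gamma$, and the mean subtraction $\phi\leftarrow\phi-\mathrm{mean}(\phi)$ is not a monotone map. That sentence should be dropped or qualified.
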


\begin{proof}[Proof sketch]
For the convex Hamiltonian ${H}$ (in our case $H(p) = \frac12 |p|^2$), the Godunov numerical Hamiltonian $H_G$ is monotone in the one-sided slope arguments (nondecreasing in $p_x^-, p_y^-$ and nonincreasing in $p_x^+, p_y^+$); see \cite{osher1991high}. The Godunov construction for the H-J equation may also be interpreted via the local Riemann problem (see \cite{leveque2002finite}, sec.~12.5). 
The LLF term adds a discrete viscosity proportional to second differences, and the requirement that $\alpha_x,\alpha_y$ defined via \eqref{eq:charspeed}, dominate the characteristic speeds ensures that the combined operator $H_G-\mathrm{diss}$ remains monotone. Since $\Delta z\,S^n_{i,j}$ is independent of $\phi$, it enters additively and does not affect monotonicity in the one-sided slopes. Monotonicity of the Euler update then follows from standard scalar monotone-scheme arguments once $\Delta z$ satisfies the corresponding CFL
condition. 
\end{proof}

% ==========================================================
\section{Algorithm and Remarks}
\label{sec:algorithm}

This section summarizes the full solver in pseudocode and records practical remarks on stability, consistency, and diagnostics. All operators referenced below correspond to the discrete definitions given in
Section~\ref{sec:discretization}.

% ----------------------------------------------------------
\subsection{Algorithm (pseudocode)}
\label{subsec:pseudocode}

Algorithm~\ref{alg:ssprk3} states the SSP-RK3 solver for
\eqref{eq:FormI} with reflecting (Neumann) boundary conditions enforced via ghost cells and zero boundary flux.
All discrete operators referenced below are defined in
Section~\ref{sec:discretization}.

\medskip
\begin{center}
\begin{minipage}{0.95\linewidth}
\hrule\vspace{0.6em}
\textbf{Algorithm 1.} SSP-RK3 solver for \eqref{eq:FormI}
\label{alg:ssprk3}

\vspace{0.4em}
\textbf{Input:}
Grid parameters $(N_x,N_y,L)$; physical parameter $k_0$;
final distance $Z$; numerical parameters
$\rho_{\min}$, $s_{\min}$,\,$\mathrm{CFL}$, $\Delta z_{\max}$. Moreover: $\Delta x=2L/N_x$ and $\Delta y=2L/N_y$.

\vspace{0.3em}
\textbf{Initialize:}
Construct grid $(x_i,y_j)$.
Set initial fields $(\rho^0,\phi^0,\gamma^0)$.
Set $z\gets 0$.

\vspace{0.3em}
\textbf{While $z<Z$:}
\begin{enumerate}\setlength{\itemsep}{0.2em}

\item \textbf{CFL step size.}
Compute centered gradients $\nabla_h^0\phi, \,\nabla_h^0\gamma$ with Neumann ghosts and set
$\bm v=k_0^{-1}\nabla_h^0\phi$,\, $\bm \omega=k_0^{-1}\nabla_h^0\gamma$.
Choose
$
\Delta z=\min\!\bigg\{\Delta z_{\max},\;
\mathrm{CFL}\cdot \bigg(
\frac{\max_{i,j} \big(a^x_{i+\frac12,j} + b^x_{i+\frac12,j}\big)}{\Delta x}
+
\frac{\max_{i,j} \big(a^y_{i,j+\frac12}\;+ b^y_{i,j+\frac12}\big)}{\Delta y}
\bigg)^{-1}\bigg\}.
$

\item \textbf{Define RHS operator $\mathcal{R}(U)$.}
Given $U=(\rho,\phi,\gamma)$:
\begin{itemize}\setlength{\itemsep}{0.15em}
\item Compute 
$Q=\Delta_h\sqrt{\rho}/\max(\sqrt{\rho},s_{\min})$ with
$\rho\leftarrow\max(\rho,\rho_{\min})$.
\item Compute velocity $\bm v=k_0^{-1}\nabla_h^0\phi$ and polarization gradient $\bm \omega=k_0^{-1}\nabla_h^0\gamma$.
\item Intensity RHS:
$\mathcal{R}_\rho=-\Div_h^{\rm Rus}(\rho\bm v + \rho \bm \omega)$
(using zero boundary flux).
\item Phase RHS:
$
\mathcal{R}_\phi
=\frac{1}{k_0}\Big(\tfrac12 Q-\tfrac12|\nabla_h^0\gamma|^2\Big)
-\big(H_G(\phi)-\mathrm{diss}\big),
$
with $H_G$ the Godunov Hamiltonian and $\mathrm{diss}$ the LLF term.
\item Polarization RHS:
$\mathcal{R}_\gamma=-\,\bm v\cdot\nabla^{\rm up}\gamma$.
\end{itemize}

\item \textbf{SSP-RK3 update} using \eqref{eq:SSP-RK3}. 

\item \textbf{Stage sanitization.}
After each stage enforce
$\phi\gets\phi-\mathrm{mean}(\phi)$

\item \textbf{Advance.}
Set $z\gets z+\Delta z$.
\end{enumerate}

\vspace{0.35em}
\textbf{Output:}
Numerical solution $(\rho,\phi,\gamma)$ at $z=Z$ and  (optional) diagnostics.

\vspace{0.6em}\hrule
\end{minipage}
\end{center}

% ----------------------------------------------------------
\subsection{Consistency, stability, and diagnostic remarks}
\label{subsec:remarks}

\paragraph{Formal accuracy}
In regions where $(\rho,\phi,\gamma)$ remain smooth, the scheme is formally
observed to be second-order in $\phi$ and first order in $
\rho$ and $\gamma$. 
The Runge--Kutta integrator is third-order accurate in $z$.
Near steep gradients of $\phi$, monotone one-sided discretizations reduce local
order but guarantee stability and {convergence to the viscosity solution.}

\paragraph{Why LLF/Godunov is needed}
The phase equation is a nonlinear H-J equation.
Centered discretizations of $H(\nabla\phi)$ are non-monotone and can generate
spurious oscillations and catastrophic growth in $\|\nabla\phi\|$.
The Godunov numerical Hamiltonian provides a monotone approximation that
converges to the viscosity solution, while LLF dissipation damps unresolved
high-frequency gradients and improves robustness under long-distance
propagation and coupling to source terms.

\paragraph{CFL control}
We choose $\Delta z$ to satisfy the explicit CFL restriction associated with the FV transport speeds and the LLF/HJ dissipation parameters, and additionally enforce the sufficient FV positivity condition of proposition~\ref{prop:positivity}.
In practice, we also impose a maximum step size $\Delta z_{\max}$ to avoid overly large jumps when the beam is nearly stationary.

\paragraph{Mass conservation check}
Because the intensity equation is discretized in conservative form with zero
boundary fluxes, the discrete mass
is conserved up to $z-$ integration error whenever the positivity floor is inactive (otherwise the floor can introduce a small mass increase).
We monitor the relative deviation $|M(z)-M(0)|/M(0)$ as a primary diagnostic.

\paragraph{Gauge fixing for $\phi$}
The model depends on $\phi$ only through $\nabla\phi$ and $\partial_z\phi$,
so $\phi$ is determined up to an additive function of $z$.
Subtracting the spatial mean of $\phi$ after each stage fixes this gauge freedom and prevents accumulation of an arbitrary constant mode, improving numerical conditioning without affecting any physical quantity.

% ----------------------------------------------------------
\section{Numerical Experiments}
\label{sec:numerics}

This section presents numerical experiments illustrating the behavior of
system~\eqref{eq:FormI} and~\eqref{eq:Form_mod} under physically relevant initial data and compares the
computed beam dynamics with available analytical predictions and experimental results.
All simulations are performed using the fully specified numerical scheme
described in Sections~\ref{sec:model}–\ref{sec:algorithm}.

\subsection{No polarization case: Madelung system} In this subsection, we demonstrate the robustness of the scheme by applying it to the well-known Madelung equations, which describe a normally diffracting beam. Assuming the light beam has no polarization, the system \eqref{eq:FormI} reduces to the Madelung system \cite{madelung1927quantum}:
\begin{subequations}\label{eq:FormII}
\begin{align}
  \partial_z \rho
  + \frac{1}{k_0}\Div\big(\rho\,\grad \phi\big)
  &= 0,
  && \Omega \times (0,Z), \label{eq:rhoII}\\
  k_0\,\partial_z \phi
  + \frac12 |\grad\phi|^2
  &= \frac12\,Q(\rho),
  && \Omega \times (0,Z). \label{eq:phiII}
\end{align}
\end{subequations}
Fix $\sigma_0>0$ and define
\begin{equation}\label{eq:sigma-spread}
\sigma(z)^2:=\sigma_0^2+\frac{z^2}{4k_0^2\sigma_0^2}.
\end{equation}
Then the following Gaussian intensity profile, together with a quadratic phase augmented by an additional arctan term, gives an exact solution of \eqref{eq:FormII}:
\begin{subequations}\label{eq:FormII-solution}
\begin{align}
\rho(x,y,z) &=\frac{1}{2\pi\sigma(z)^2}\exp\!\left(-\frac{x^2+y^2}{2\sigma(z)^2}\right), \\
\phi(x,y,z)&=\frac{k_0\,z}{2\big(z^2+4k_0^2\sigma_0^4\big)}(x^2+y^2)
\;-\;\arctan\!\Big(\frac{z}{2k_0\sigma_0^2}\Big).
\end{align}
\end{subequations}
We note that the exact Gaussian solution \eqref{eq:FormII-solution} is naturally posed on $\mathbb{R}^2$ and, on a bounded
rectangular domain $\Omega=[-L,L]\times[-L,L]$, does not generally satisfy reflective (homogeneous Neumann)
boundary conditions \eqref{eq:BC_neumann}. Accordingly, for this verification test only, we impose the spatially and $z$-dependent Neumann data extracted from the exact solution and enforce it through ghost-cell reconstruction
(so that the discrete normal derivative matches the prescribed data). This follows the standard
finite-difference/finite-volume treatment of nonhomogeneous Neumann boundaries; see Sec.~10.6 of \cite{thomas2013numerical}. The parameters $k_0, \sigma_0$, CFL number, and domain half-width $L$ are the same as in Table~\ref{tab:params}.

\begin{figure}[h!]
    \centering
    \includegraphics[width=0.328\linewidth]{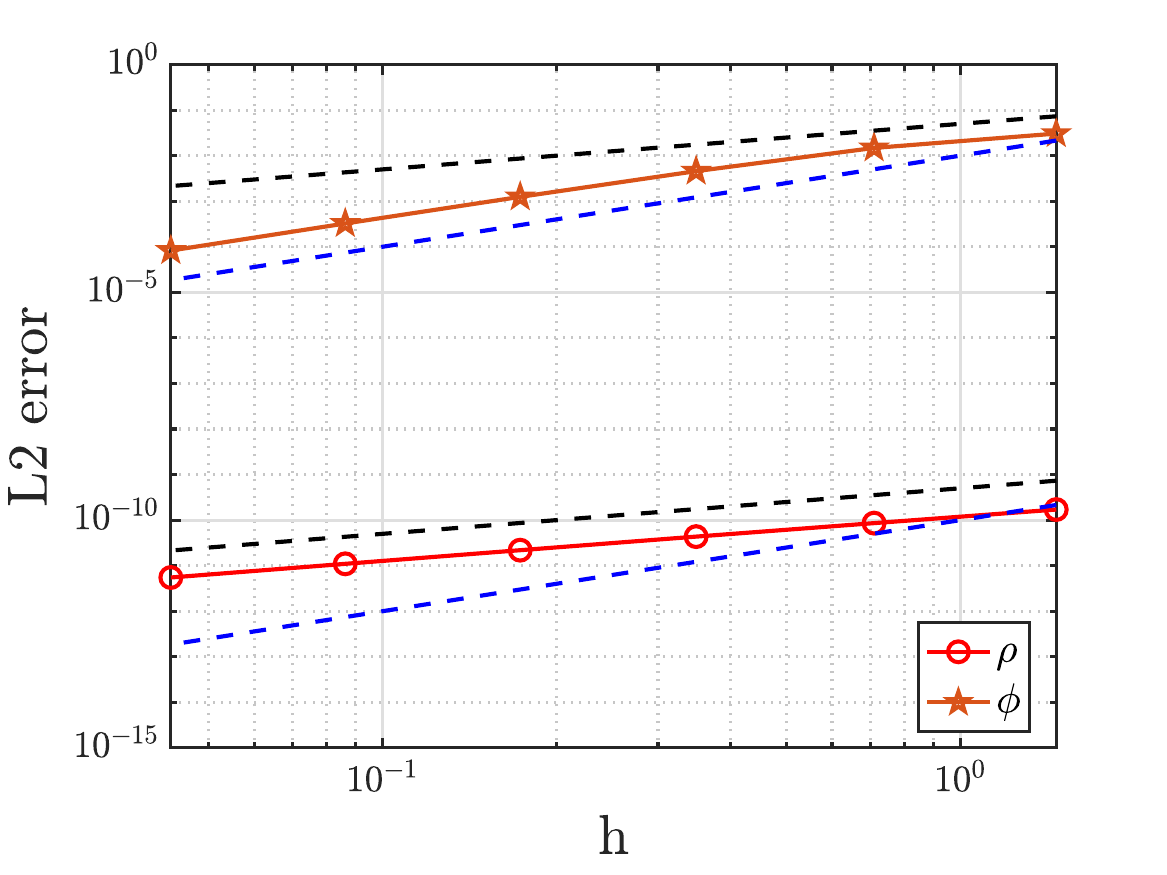}
     \includegraphics[width=0.327\linewidth]{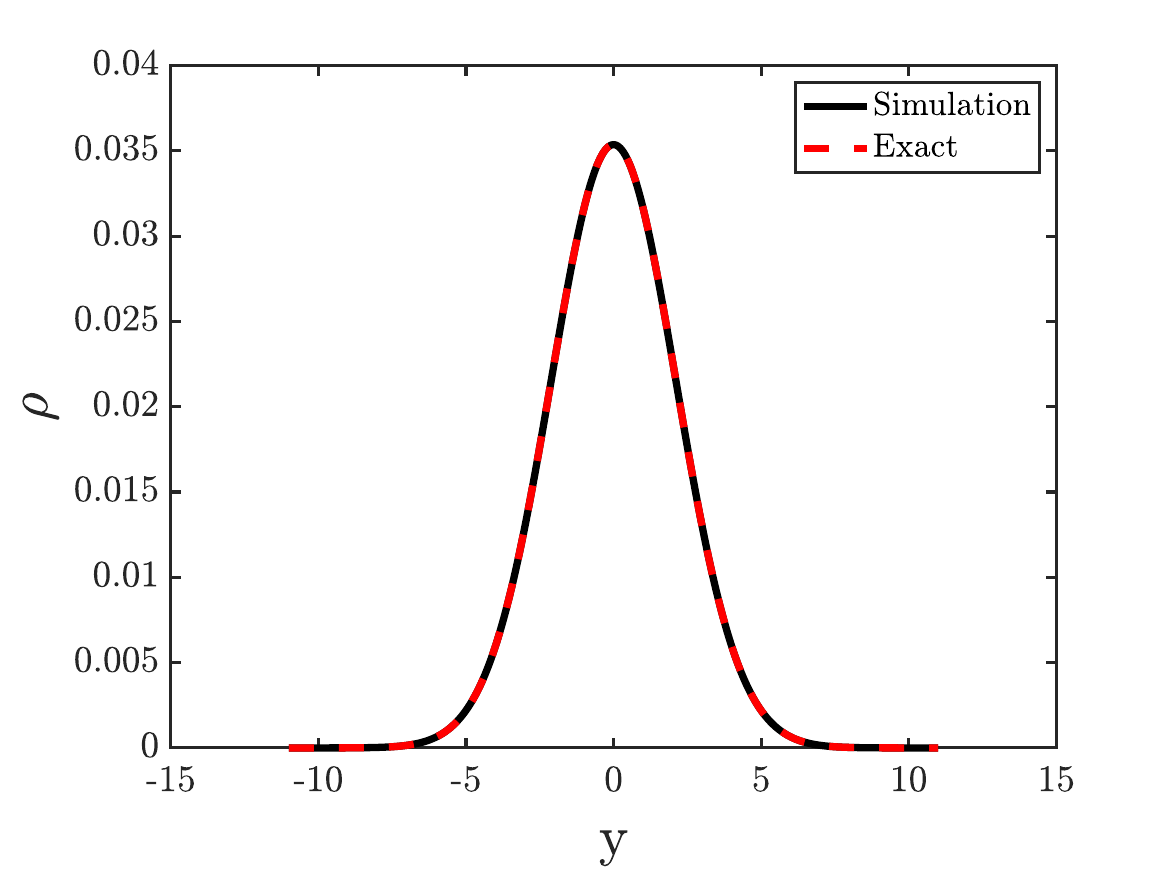}
    \includegraphics[width=0.327\linewidth]{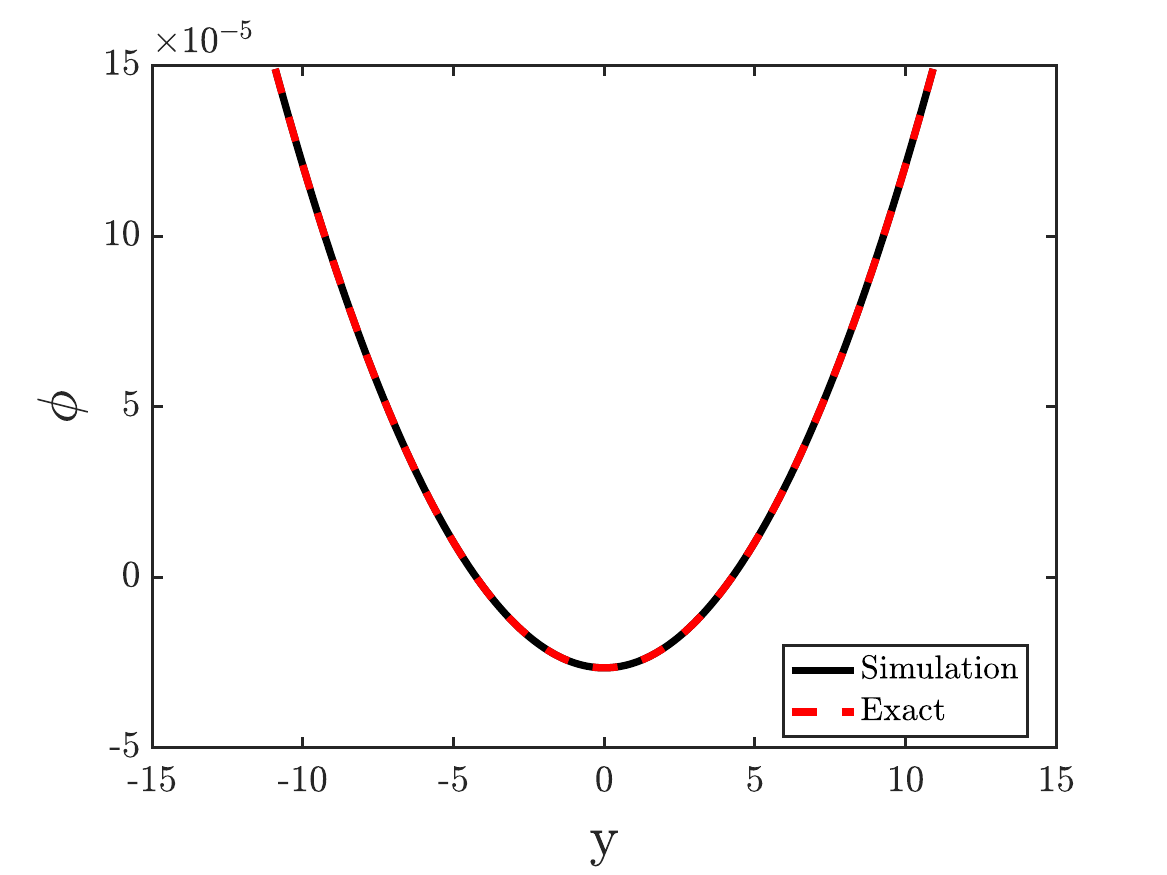}    
    \caption{Left panel: Log--log plot of the $L^2$-error versus mesh size $h$ for the uniform refinements
$N\in\{16,32,64,128,256,512\}$. The reference slopes indicate first-order convergence for $\rho$
(approximately parallel to the black line) and second-order convergence for $\phi$
(approximately parallel to the blue guideline). 
Middle panel: Comparison of the intensity $\rho$; the exact solution is shown by the red dotted curve
and the numerical approximation by the solid black curve.
Right panel: Comparison of the phase $\phi$.
Middle and right panels correspond to $x=0$ slice as a function of $y$ on the $512 \times 512$ mesh.}
    \label{fig:L2plot}
\end{figure}
Figure~\ref{fig:L2plot} (left) reports the error in the norm $L^\infty(0,Z;L^2(\Omega))$ under uniform mesh refinement
$N\in\{16,32,64,128,256,512\}$ (with $\Delta z$ chosen according to the CFL condition used throughout the paper).
We observe first-order convergence for $\rho$ (approximately parallel to the black reference slope) and
second-order convergence for $\phi$ (approximately parallel to the blue reference slope). Figure~\ref{fig:L2plot} (middle and right) overlays the exact and numerical solutions for $\rho$ and $\phi$
along the line $x=0$ (plotted as functions of $y$) at the final propagation distance
($Z=1$ mm, in this test case only), showing excellent agreement and confirming that the scheme reproduces
both the Gaussian profile and the quadratic phase.

% ----------------------------------------------------------
\subsection{Numerical setup}

Unless stated otherwise, all remaining simulations use the parameters listed in Table~\ref{tab:params}. The configuration corresponds to a paraxial optical beam propagating over distances of tens of meters, while the transverse domain remains on the millimeter scale. This pronounced scale separation is precisely the regime in which existing analytical results are valid only for short propagation distances, motivating the present large-scale numerical study. 

\begin{table}[h!]
\centering
\begin{tabular}{ll}
\hline
Parameter & Value \\ \hline
Domain half-width $L$ & $11\ \mathrm{mm}$ \\
Grid spacing $(\Delta x,\Delta y)$ & $\approx 0.06857\ \mathrm{mm}$ \\
Wavelength $\lambda$ & $1.5\times 10^{-3}\ \mathrm{mm}$ \\
Wavenumber $k_0$ & $2\pi/\lambda \approx 4189\ \mathrm{mm}^{-1}$ \\
Beam width $\sigma$ & $1.5 \times \sqrt{2}\ \mathrm{mm}$ \\
Polarization shift $x_0$ & $ -3.5, 3.5, 4.5, -5.5\ \mathrm{mm}$ \\
Final propagation distance $Z$ & $40\ \mathrm{m}$ \\
CFL number & $0.4$ \\
Maximum step size $\Delta z_{\max}$ & $10\ \mathrm{mm}$ \\
Intensity floor $\rho_{\min}$ & $10^{-20}\max\rho_0$ \\
\hline
\end{tabular}
\caption{Numerical parameters used in the simulations. Throughout these experiments, the profile scale is chosen as $a=|x_0|$.}
\label{tab:params}
\end{table} 

The step size in $z$ is chosen adaptively according to the CFL condition
described in Section~\ref{sec:algorithm}, with an additional cap
$\Delta z_{\max}$ to prevent overly large steps during early stages of the
propagation.

% ----------------------------------------------------------
\subsection{Initial conditions}
\label{s:inc}

The initial data consist of a centered Gaussian beam with a prescribed
polarization phase:
\begin{align*}
  \rho(x,y,0) = A^2 \exp\!\left(-\frac{x^2+y^2}{\sigma^2}\right),\quad 
  \phi(x,y,0) = 0,\quad 
  \gamma(x,y,0) = \frac{\pi}{2}\frac{(y-x_0)^2}{a^2} + \frac{\pi}{8}.
\end{align*}
The amplitude $A>0$ sets the initial peak intensity and does not influence the
centroid dynamics, which depend only on normalized moments of $\rho$. The parameters $a>0$ and $x_0$ determine the initial magnitude and direction of the scaled polarization gradient: $a$ sets the transverse length scale of the quadratic polarization profile, while $x_0$ specifies its offset relative to the beam centroid and hence the sign of the induced deflection. In the numerical experiments reported below, we choose $a=|x_0|$, consistent with the experimental parameter sets in \cite{Nichols:25a}. All simulations start from rest in the sense that $\nabla\phi(x,y,0)=0$, so that any transverse beam motion arises dynamically from coupling between
$\rho$, $\phi$, and $\gamma$.

\subsection{Beam centroid dynamics for the full model \eqref{eq:FormI}}
Figure~\ref{fig:centroid_phi+gamma} illustrates the centroid dynamics for the full model \eqref{eq:FormI}. Consistent with Theorem~\ref{thm:yc_linear}, the centroid exhibits linear-in-$z$ behavior, confirming that the full model does not produce the quadratic bending observed in the reduced model. This provides a useful baseline before turning to the reduced model \eqref{eq:Form_mod}, where momentum is no longer conserved and nonlinear centroid motion arises.

\begin{figure}[h!]
    \centering
    \includegraphics[width=0.49\linewidth]{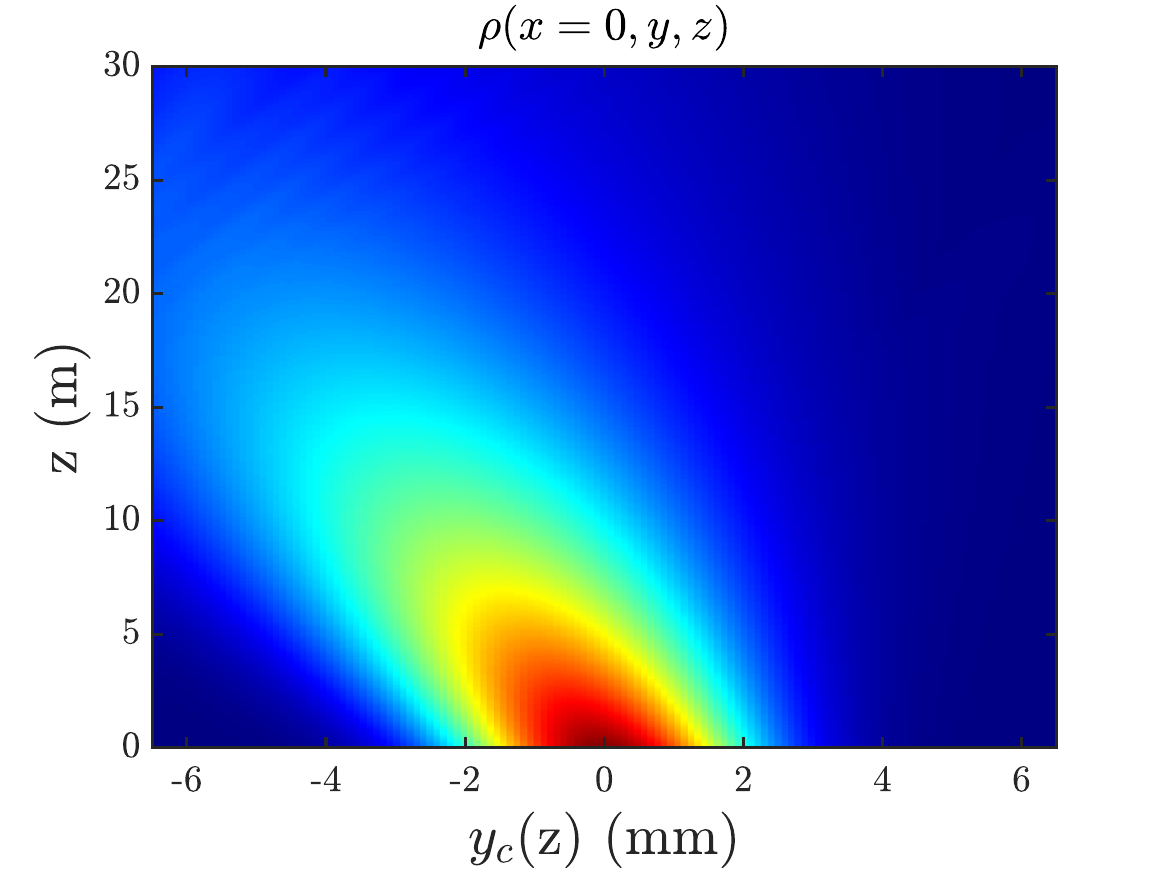}
    \includegraphics[width=0.49\linewidth]{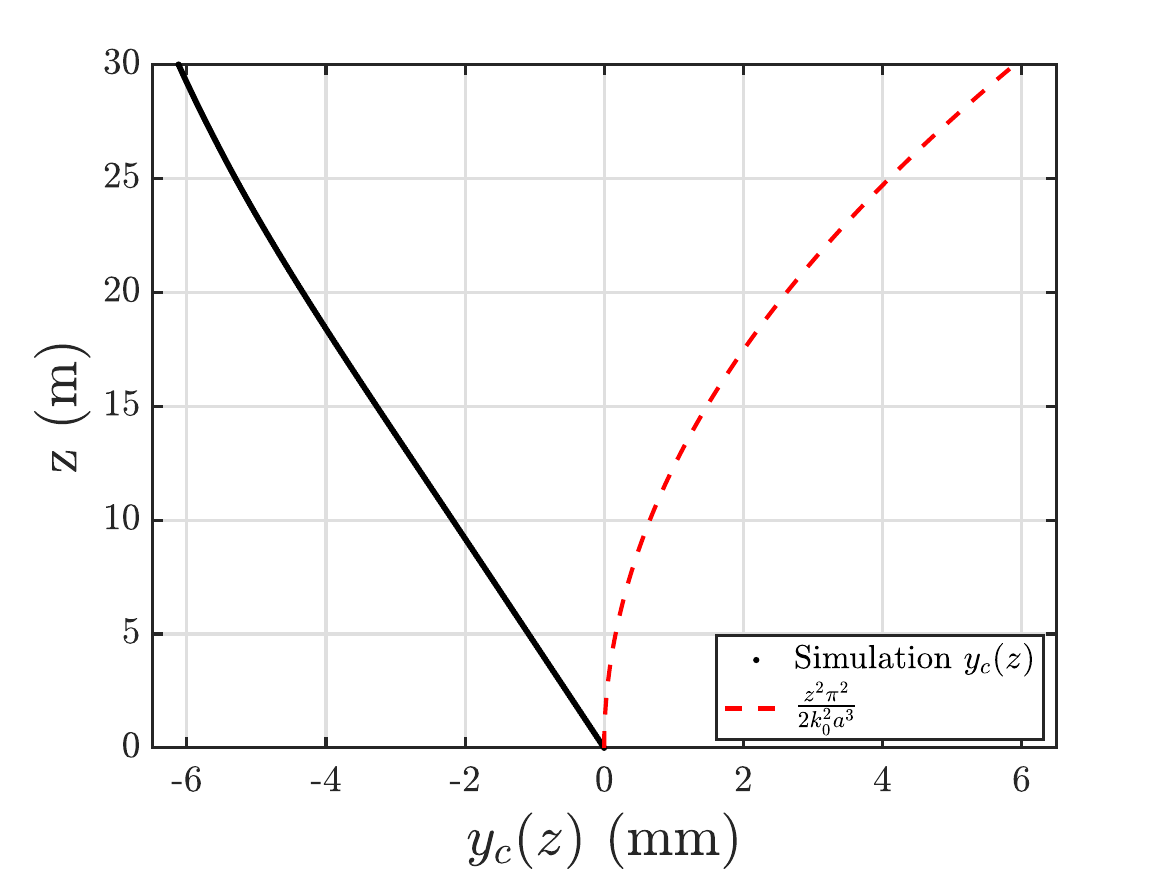}    
    \caption{Beam centroid dynamics for the full model \eqref{eq:FormI}. Left panel: Pseudocolor plot of the transverse intensity slice $\rho$ in the y-z plane, showing the linear drift of the beam during propagation. Right panel: Corresponding beam-centroid trajectory $y_c(z)$ as a function of propagation distance $z$. The simulation starts from $\nabla\phi_0\equiv 0,$ so the observed transverse motion is generated dynamically by the coupling between $\rho,\,\phi,$ and $\gamma.$}
    \label{fig:centroid_phi+gamma}
\end{figure}

% ----------------------------------------------------------
\subsection{Beam centroid dynamics and comparison with theory for reduced model \eqref{eq:Form_mod}}
The beam centroid is as given in \eqref{eq:centroid} and is computed numerically using discrete quadrature. For the Gaussian initial data considered here, the short-distance asymptotic prediction for the centroid is derived in Appendix~\ref{append:shortdistance}. In particular, the transverse displacement $y_c(z)$ follows a quadratic bending law for short propagation distances.
\begin{equation}\label{eq:theory_centroid}
  y_c^{\mathrm{th}}(z)
  =
  \frac{\pi^2}{2k_0^2 a^3}\,z^2 .
\end{equation}

For plotting, we report the propagation coordinate in meters.
When evaluating \eqref{eq:theory_centroid}, we convert $z$ to millimeters via
$z_{\mathrm{mm}} = 10^3 z_{\mathrm{m}}$ so that units remain consistent and
$y_c$ is reported in millimeters.

\begin{figure}[h!]
    \centering
    \includegraphics[width=0.4\linewidth]{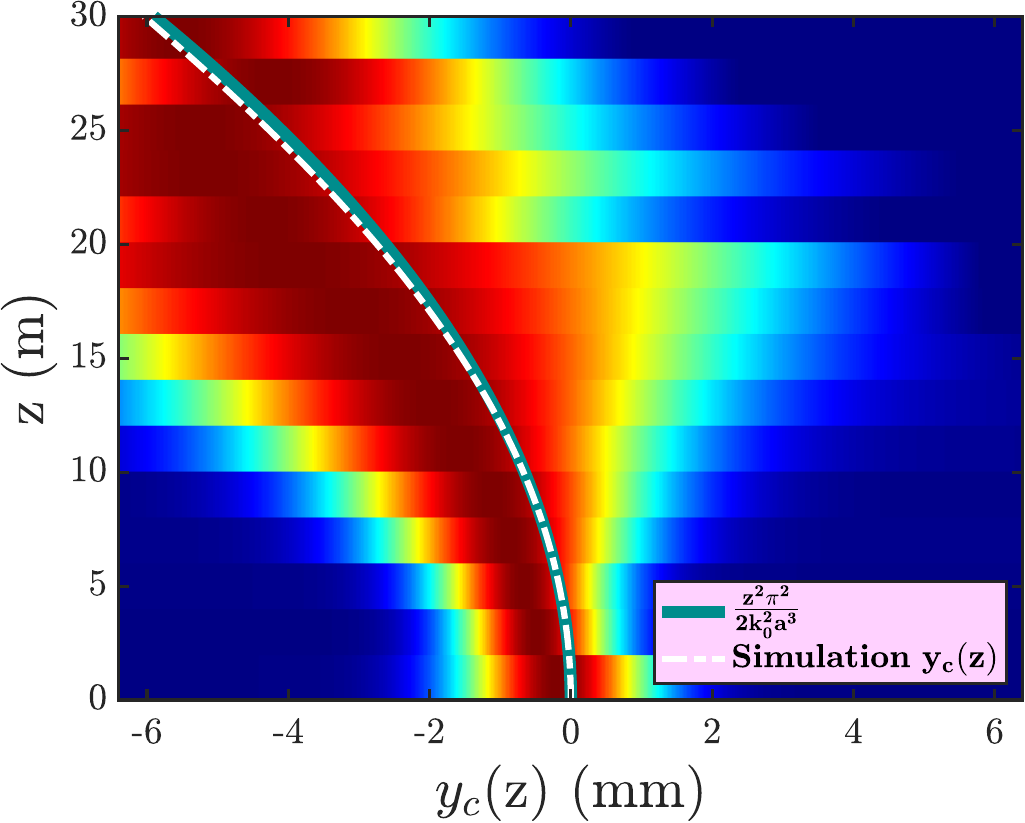}\quad
    \includegraphics[width=0.4\linewidth]{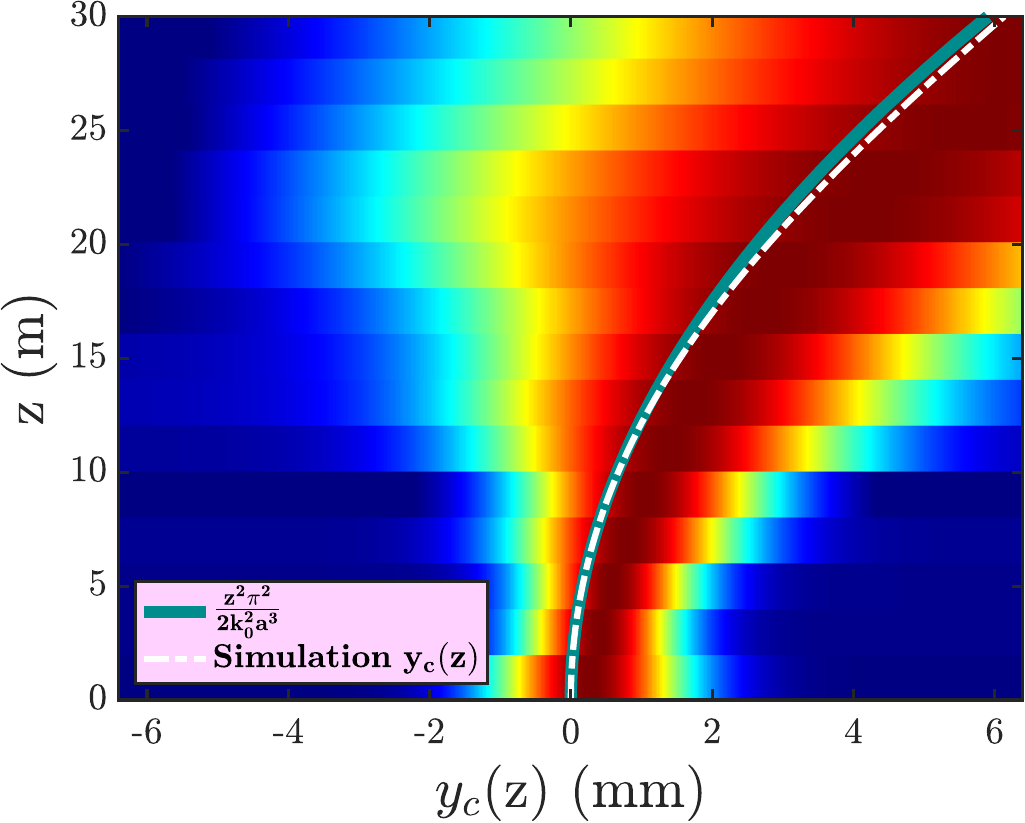}
    \includegraphics[width=0.4\linewidth]{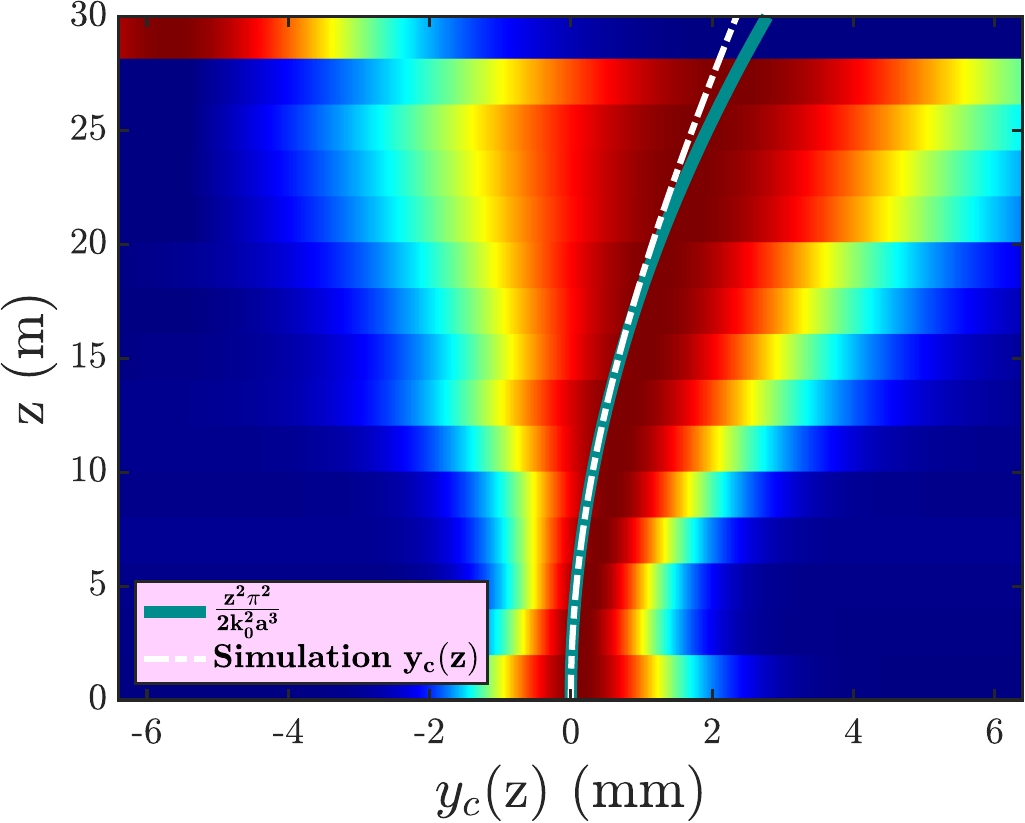} \quad
    \includegraphics[width=0.4\linewidth]{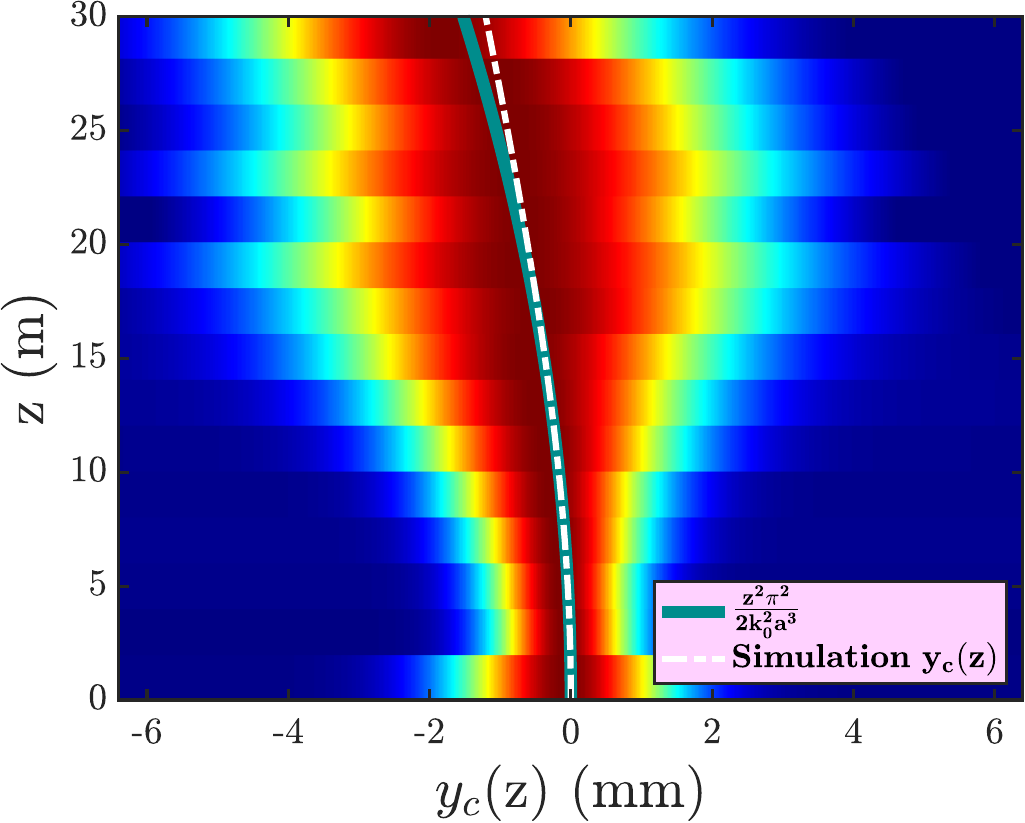}

    \caption{
Numerical vs analytical vs experimental (background) comparison. Top row: Here $x_0 = -3.5mm$ (left) and $x_0 = 3.5mm$ (right). Bottom row: Here $x_0 = 4.5mm$ (left) and $x_0 = -5.5mm$ (right). The white dotted lines represent our numerical simulation results, and the solid teal lines represent the short distance quadratic behavior from \eqref{eq:theory_centroid}. These results correspond to $Z = 30m$. 
}
    \label{fig:bending_80}
\end{figure}

Figure~\ref{fig:bending_80} compares the numerical centroid trajectory with the
analytical prediction~\eqref{eq:theory_centroid} and real experiment in the background. 
At short propagation distances (up to $30m$), the numerical solution closely follows the
quadratic law, confirming consistency with Appendix~\ref{append:shortdistance} and \cite{JMNichols_DVNickel_FBucholtz_2022a}.
At larger distances, systematic deviations from the quadratic behavior emerge, see Figure~\ref{fig:mass} (top row) where $Z = 40m$. The results are consistent as the spatial mesh is refined (not shown here). 
These deviations are expected and reflect the breakdown of the short-distance asymptotic assumptions.

\paragraph{Mass conservation}
Because the intensity equation is discretized conservatively with zero boundary
flux, the discrete mass $M_h(z)$ is conserved up to time-integration error
(Section~\ref{subsec:remarks}), see Figure~\ref{fig:mass} (bottom row).
In all runs reported here, $|M_h(z)-M_h(0)|/M_h(0)\le 10^{-12}$.

\begin{figure}[h!]
    \centering
    \includegraphics[width=0.4\linewidth]{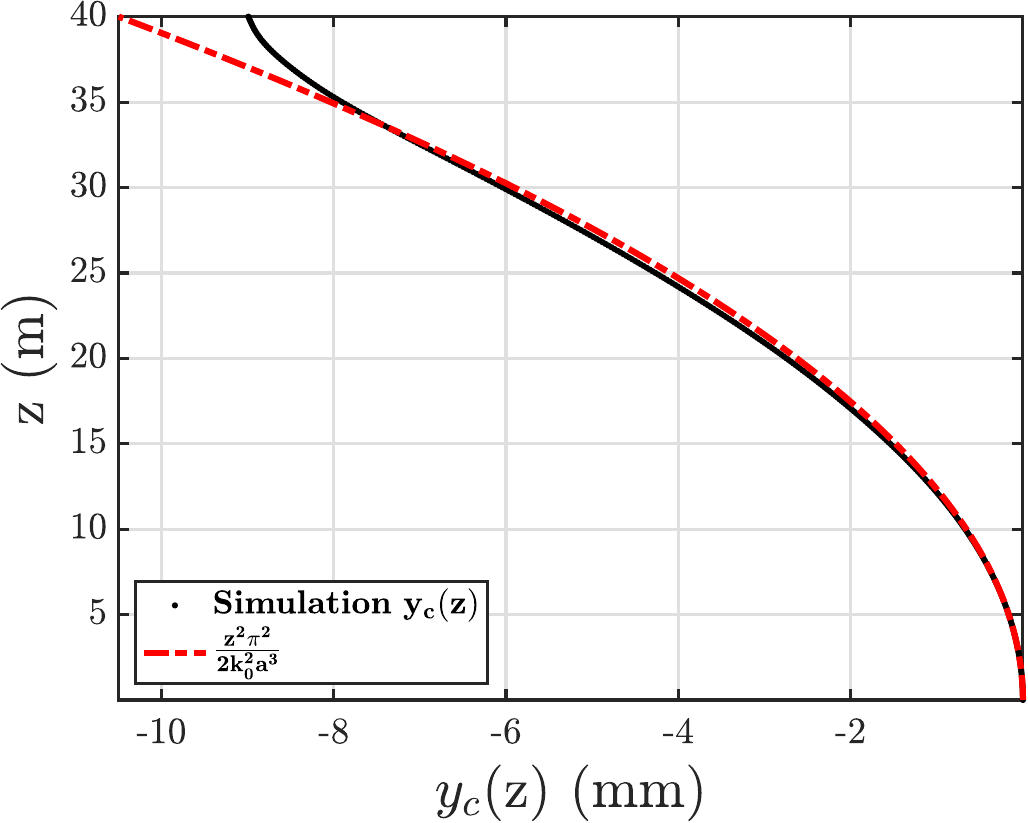}
    \includegraphics[width=0.4\linewidth]{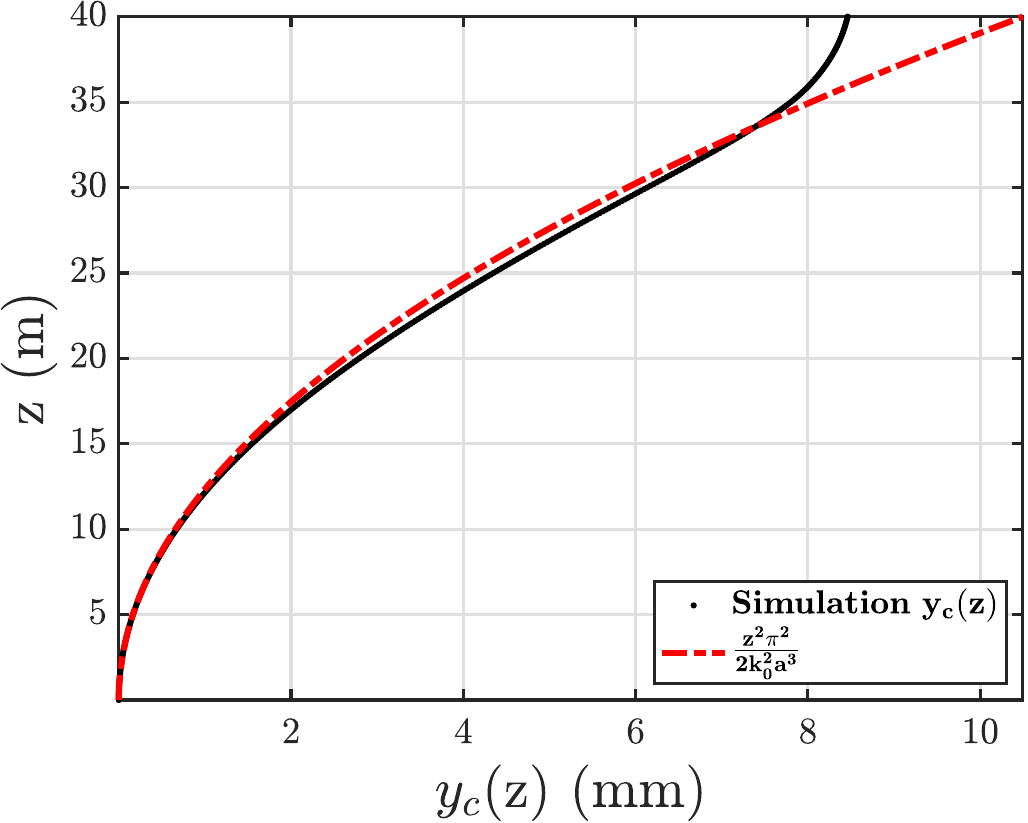}    
    \includegraphics[width=0.4\linewidth]{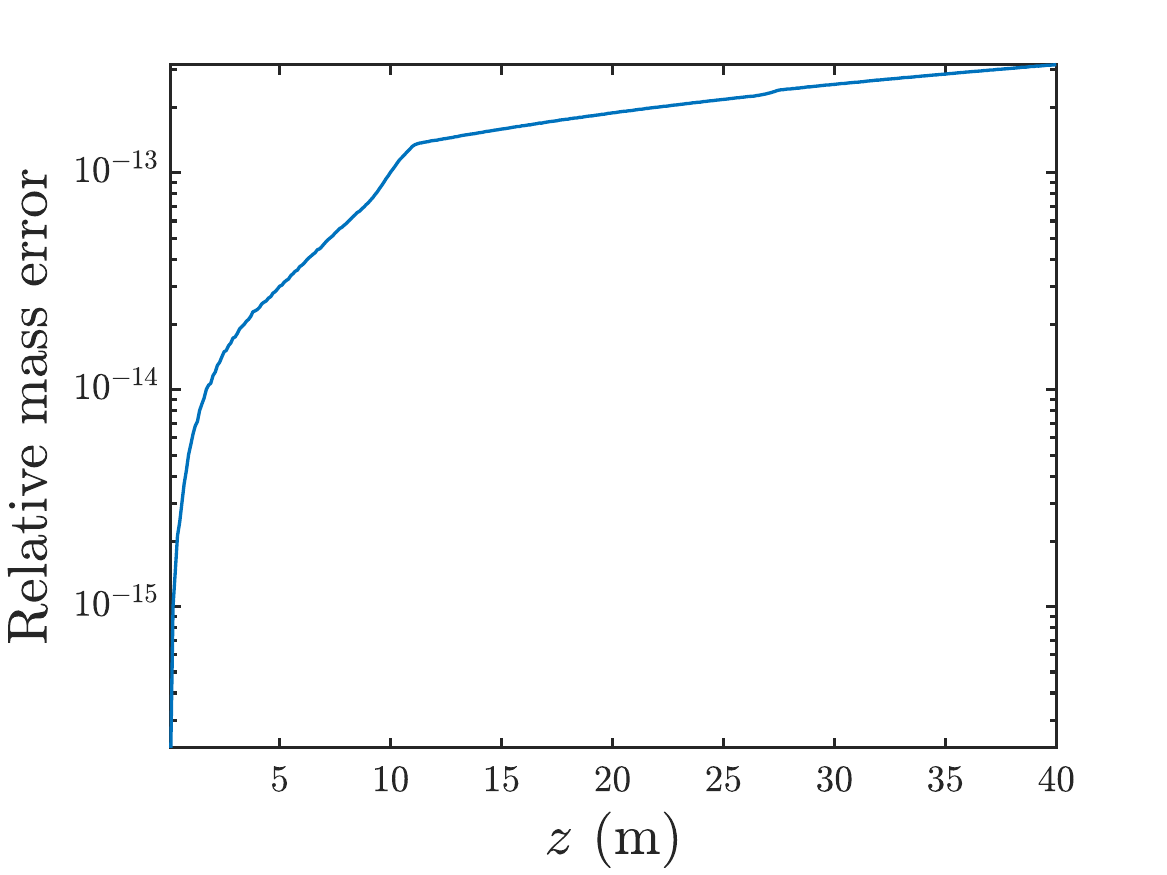}
    \includegraphics[width=0.4\linewidth]{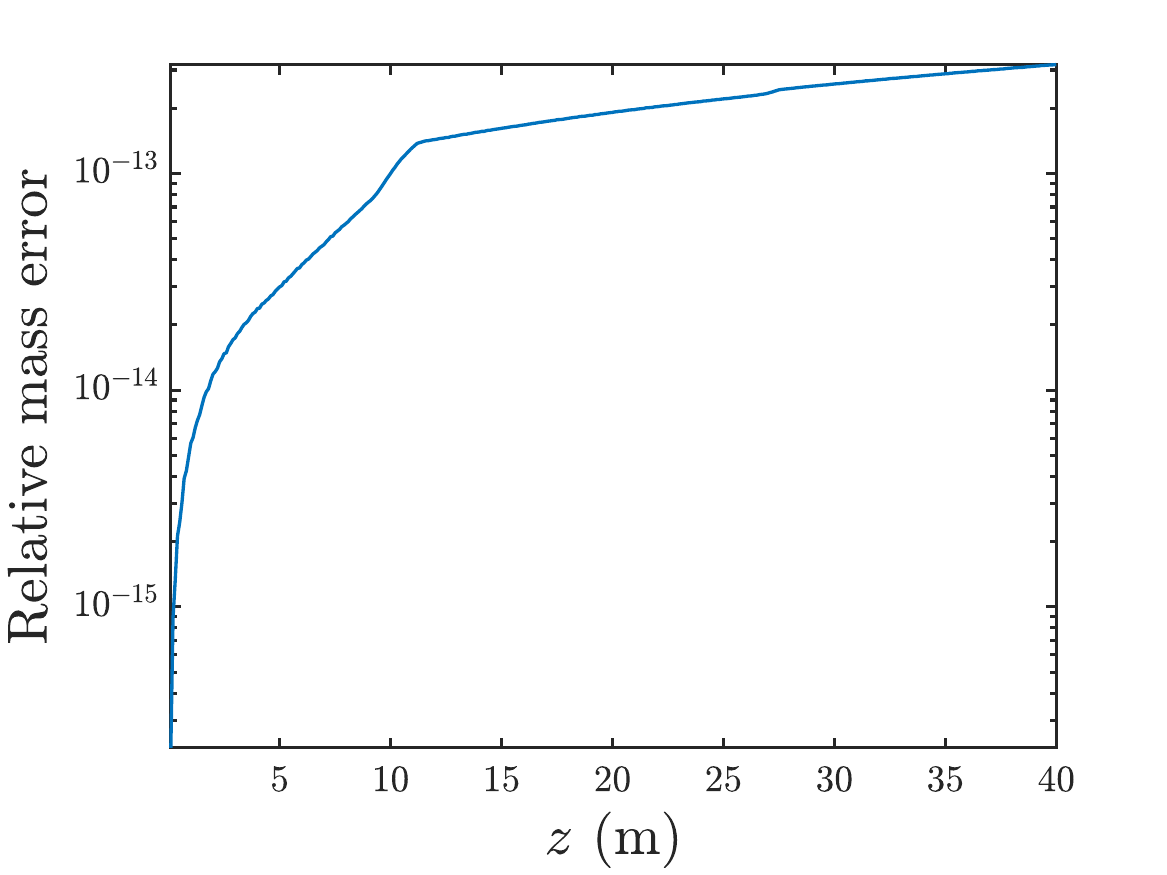}
    \caption{Top row: Numerical vs analytical comparison. Top row: Here $a = -3.5mm$ (left) and $a = 3.5mm$ (right). Here $Z = 40m$. 
    Bottom row:     
    Respective relative deviation of the total mass $M(z) = \int_\Omega \rho dx dy$. 
    The mass remains conserved, with relative deviations below $10^{-12}$ over the entire propagation distance confirming conservative behavior.}
    \label{fig:mass}
\end{figure}

Overall, these results validate the numerical scheme in the regime where analytical predictions apply and demonstrate the necessity of full-scale numerical simulation to access long-distance beam dynamics beyond the reach of short-distance asymptotic theory.

% ----------------------------------------------------------
\section{Discussion and Outlook}
\label{sec:outlook}

The numerical experiments presented in this work demonstrate that the model~\eqref{eq:Form_mod} captures subtle polarization-induced beam bending effects over propagation distances far beyond the reach of existing
analytical theory.
The agreement observed at short distances validates both the underlying model
and the numerical discretization, while the deviations at larger distances
highlight the limitations of short-distance asymptotic approximations.

From a modeling perspective, the results clarify the role of the transport
velocity in the intensity equation.
Transport by scalar phase gradient alone is essential for consistency with the analytical
bending law, whereas inclusion of the polarization phase gradient in this work produces qualitatively
different dynamics corresponding to a distinct physical model.

From a numerical standpoint, the study underscores the importance of monotone
H-J discretizations and conservative transport schemes when simulating coupled phase–intensity systems.
The combination of Godunov-type numerical Hamiltonians, local Lax--Friedrichs
stabilization, and conservative finite-volume fluxes proved essential for
robustness under long-distance propagation.\;Na\"ive centered discretizations were found to be unstable and unsuitable for this
class of problems.

Several extensions of this work are natural. First, a fully discrete entropy/energy stability theory and rigorous convergence (and, if feasible, long-time stability) results for the suggested discretization would be beneficial. Second, to minimize numerical diffusion while maintaining the monotonicity and robustness required for long-distance propagation, higher-order spatial discretizations may also be interesting.

\section*{Acknowledgment}
The authors are thankful to Drs. Jonathan Nichols and  Steven Rodriguez (U.S. Naval Research Laboratory, Washington D.C.) for providing several helpful suggestions. We are also grateful to Dr. Christopher Griffin (Penn State) for proof reading the manuscript and several helpful suggestions.

\appendix

%============================================================
\section{Quantum--stress identity and momentum balance}
\label{app:quantum}
%============================================================

%------------------------------------------------------------
\paragraph{Quantum stress tensor}
%------------------------------------------------------------
Let $\rho>0$ and set $w:=\sqrt{\rho}$ and
\[
Q(\rho):=\frac{\Delta w}{w}=\frac{\Delta \sqrt{\rho}}{\sqrt{\rho}}.
\]
Define the \emph{quantum stress} tensor
\begin{equation}\label{eq:quantum_stress_def}
%\boxed{
\bm S(\rho)
:=
-\frac{\rho}{4k_0^2}\,\nabla\otimes\nabla\big(\log\rho\big)
\;=\;
\frac{1}{2k_0^2}\Big(\nabla w\otimes \nabla w - w\,\nabla^2 w\Big).
%}
\end{equation}

\begin{lemma}\label{lem:Srelation_correct}
For $\rho>0$ (equivalently $w>0$) the following hold:
\begin{align}
%\boxed{
\frac{\Delta \sqrt{\rho}}{\sqrt{\rho}}
=
\frac{\Delta \rho}{2\rho}-\frac{|\nabla \rho|^2}{4\rho^2},
%}
\label{eq:Q_identity_rho}
\\[2mm]
%\boxed{
\Div\, \bm S(\rho)
=
-\frac{1}{2k_0^2}\,\rho\,\nabla\!\left(\frac{\Delta \sqrt{\rho}}{\sqrt{\rho}}\right)
=
-\frac{1}{2k_0^2}\,\rho\,\nabla Q(\rho).
%}
\label{eq:divS_identity}
\end{align}
\end{lemma}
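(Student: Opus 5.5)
Both identities are direct calculus computations with $w=\sqrt\rho>0$, and we use the second representation of $\bm S(\rho)$ in \eqref{eq:quantum_stress_def} for the divergence. We also check that the two expressions in \eqref{eq:quantum_stress_def} agree.

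\emph{Identity \eqref{eq:Q_identity_rho}.} Write $w=\rho^{1/2}$, so
\[
\nabla w=\tfrac12\rho^{-1/2}\nabla\rho,
\qquad
\Delta w=\tfrac12\rho^{-1/2}\Delta\rho-\tfrac14\rho^{-3/2}|\nabla\rho|^2 .
\]
Dividing by $w=\rho^{1/2}$ gives \eqref{eq:Q_identity_rho} immediately.

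\emph{Consistency of the two forms of $\bm S$.} From $\log\rho=2\log w$ we get
\[
\nabla\otimes\nabla\log\rho=\frac{2}{w}\nabla^2 w-\frac{2}{w^2}\nabla w\otimes\nabla w .
\]
Multiplying by $-\rho/(4k_0^2)=-w^2/(4k_0^2)$ yields $\frac{1}{2k_0^2}\big(\nabla w\otimes\nabla w-w\nabla^2 w\big)$, as stated.

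\emph{Divergence identity \eqref{eq:divS_identity}.} Take the row-wise divergence $(\Div \bm S)_i=\partial_j S_{ij}$ in index notation. For the first term,
\[
\partial_j(\partial_i w\,\partial_j w)=\partial_{ij}w\,\partial_j w+\partial_i w\,\Delta w .
\]
For the second term,
\[
\partial_j(w\,\partial_{ij}w)=\partial_j w\,\partial_{ij}w+w\,\partial_i\Delta w ,
\]
where we commute third derivatives, $\partial_j\partial_{ij}w=\partial_i\Delta w$, which smoothness permits. Subtracting, the mixed terms $\partial_{ij}w\,\partial_j w$ cancel, leaving
\[
\Div\big(\nabla w\otimes\nabla w-w\nabla^2 w\big)=\Delta w\,\nabla w-w\,\nabla\Delta w .
\]
Now compare with
\[
\rho\,\nabla Q=w^2\nabla\!\Big(\frac{\Delta w}{w}\Big)=w\,\nabla\Delta w-\Delta w\,\nabla w .
\]
This is exactly the negative of the previous expression, so with the prefactor $1/(2k_0^2)$ we obtain $\Div\bm S=-\frac{1}{2k_0^2}\rho\nabla Q(\rho)$.

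\emph{Main obstacle.} The only delicate point is bookkeeping. The tensor divergence must be taken row-wise, consistent with its use in \eqref{eq:local_momentum_balance}. This is harmless because $\bm S$ is symmetric. One must also avoid sign slips in the $\log\rho$ form. Positivity of $\rho$ guarantees that every quotient above is well defined.
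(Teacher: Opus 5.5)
Your proposal is correct and follows the paper's proof essentially step for step: the chain-rule computation giving \eqref{eq:Q_identity_rho}, then the divergence of the $w$-form of $\bm S$ via the two product-rule identities, cancellation of the Hessian terms, and identification of $(\Delta w)\nabla w - w\nabla\Delta w$ with $-w^2\nabla(\Delta w/w)$. Your additional check that the $\log\rho$ form and the $w$ form of $\bm S$ agree is a small step the paper leaves unverified, and it is correct.
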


\begin{proof}
\emph{Step 1: identity for $Q$.}
Write $w=\rho^{1/2}$. Then
\[
\nabla w=\frac{1}{2}\rho^{-1/2}\nabla\rho,
\qquad
\Delta w=\Div(\nabla w)
=\frac12 \rho^{-1/2}\Delta\rho-\frac14 \rho^{-3/2}|\nabla\rho|^2.
\]
Dividing by $w=\rho^{1/2}$ gives \eqref{eq:Q_identity_rho}.

\medskip
\emph{Step 2: divergence of $\bm S(\rho)$.}
Start from the $w$--representation in \eqref{eq:quantum_stress_def}:
\[
\bm S(\rho)=\frac{1}{2k_0^2}\big(\nabla w\otimes \nabla w - w\nabla^2 w\big).
\]
Use the standard vector/tensor identities (valid componentwise):
\[
\Div(\nabla w\otimes \nabla w)
=
(\nabla^2 w)\,\nabla w + (\Delta w)\,\nabla w,
\qquad
\Div\!\big(w\nabla^2 w\big)
=
w\,\nabla(\Delta w)+(\nabla^2 w)\,\nabla w.
\]
Subtracting yields
\[
\Div \bm S(\rho)
=
\frac{1}{2k_0^2}\Big((\Delta w)\nabla w - w\nabla(\Delta w)\Big)
=
-\frac{1}{2k_0^2}\,w^2\,\nabla\!\left(\frac{\Delta w}{w}\right).
\]
Since $w^2=\rho$ and $\Delta w/w=Q(\rho)$, this is exactly \eqref{eq:divS_identity}.
\end{proof}

\section{Recovering the short-distance quadratic law from Theorem 2.7}
\label{append:shortdistance}
We now show that Theorem~\ref{thm:yc_nonlinear} recovers the short-distance quadratic bending law
for the initial data used in Section~\ref{s:inc}. Recall from Theorem~2.7 that, for
the reduced model,
\[
y_c(z)
=
y_c(0)
+
\frac{P_y(0)}{M}z
-
\frac{1}{M}
\int_0^z\int_0^s F_y(\tau)\,d\tau\,ds
-
\frac{1}{k_0M}
\int_0^z
\left(
\int_\Omega \rho\,\gamma_y\,dxdy
\right)ds,
\]
where
\[
F_y(z)
=
\frac{1}{k_0^2}
\int_\Omega
\rho\,\nabla\gamma\cdot\nabla(\partial_y\theta)\,dxdy,
\qquad
\theta=\phi+\gamma .
\]
This is precisely the centroid identity established in Theorem~\ref{thm:yc_nonlinear} for the reduced model. 

Let the initial data be as in section~\ref{s:inc}. Then
\[
\gamma_{0,y}(y)=\frac{\pi}{a^2}(y-x_0),
\qquad
\gamma_{0,yy}=\frac{\pi}{a^2}.
\]
Since \(\rho_0\) is centered in \(y\), we have
\[
y_c(0)=0,
\qquad
\int_\Omega y\rho_0\,dxdy=0.
\]
Define
\[
G(z)
:=
\int_\Omega \rho(x,y,z)\gamma_y(x,y,z)\,dxdy.
\]
We first show that
\[
G'(0)=0.
\]
Differentiating gives
\[
G'(z)
=
\int_\Omega \rho_z\gamma_y\,dxdy
+
\int_\Omega \rho\,\partial_z\gamma_y\,dxdy.
\]
For the reduced model,
\[
\rho_z+\frac1{k_0}\operatorname{div}(\rho\nabla\phi)=0,
\qquad
\gamma_z+\frac1{k_0}\nabla\phi\cdot\nabla\gamma=0.
\]
Evaluating at \(z=0\), the assumption \(\nabla\phi_0\equiv0\) gives
\[
\rho_z(\cdot,0)
=
-\frac1{k_0}\operatorname{div}(\rho_0\nabla\phi_0)
=
0,
\]
and
\[
\gamma_z(\cdot,0)
=
-\frac1{k_0}\nabla\phi_0\cdot\nabla\gamma_0
=
0.
\]
Therefore
\[
\partial_z\gamma_y(\cdot,0)
=
\partial_y\gamma_z(\cdot,0)
=
0.
\]
Hence
\[
G'(0)
=
\int_\Omega \rho_z(\cdot,0)\gamma_{0,y}\,dxdy
+
\int_\Omega \rho_0\,\partial_z\gamma_y(\cdot,0)\,dxdy
=
0.
\]
Consequently,
\[
G(z)=G(0)+o(z)
\qquad
\text{as }z\downarrow0,
\]
and therefore
\[
\int_0^z G(s)\,ds
=
G(0)z+o(z^2).
\]

Next, since \(\nabla\phi_0\equiv0\), the initial total momentum satisfies
\[
P_y(0)
=
\frac1{k_0}
\int_\Omega \rho_0(\phi_{0,y}+\gamma_{0,y})\,dxdy
=
\frac1{k_0}
\int_\Omega \rho_0\gamma_{0,y}\,dxdy
=
\frac{G(0)}{k_0}.
\]
Thus the linear terms in Theorem~\ref{thm:yc_nonlinear} cancel:
\[
\frac{P_y(0)}{M}z
-
\frac{1}{k_0M}G(0)z
=
0.
\]
Since
\[
\int_0^z G(s)\,ds=G(0)z+o(z^2),
\]
the term involving \(G\) contributes no quadratic-order correction.

It remains to compute the leading contribution from \(F_y\). At \(z=0\),
\[
\theta_0=\phi_0+\gamma_0.
\]
Because \(\nabla\phi_0\equiv0\), we have
\[
\nabla(\partial_y\theta_0)
=
\nabla(\partial_y\gamma_0).
\]
Moreover, \(\gamma_0\) depends only on \(y\). Hence
\[
\nabla\gamma_0\cdot\nabla(\partial_y\theta_0)
=
\nabla\gamma_0\cdot\nabla(\gamma_{0,y})
=
\gamma_{0,y}\gamma_{0,yy}.
\]
Using the explicit formula for \(\gamma_0\),
\[
\gamma_{0,y}\gamma_{0,yy}
=
\frac{\pi}{a^2}(y-x_0)\frac{\pi}{a^2}
=
\frac{\pi^2}{a^4}(y-x_0).
\]
Therefore
\[
F_y(0)
=
\frac1{k_0^2}
\int_\Omega
\rho_0
\frac{\pi^2}{a^4}(y-x_0)
\,dxdy.
\]
Since \(\rho_0\) is centered in \(y\),
\[
\int_\Omega y\rho_0\,dxdy=0,
\qquad
M=\int_\Omega \rho_0\,dxdy.
\]
Thus
\[
F_y(0)
=
-\frac{\pi^2x_0}{k_0^2a^4}M.
\]
Assuming \(F_y\) is continuous at \(z=0\), we have
\[
\int_0^z\int_0^s F_y(\tau)\,d\tau\,ds
=
\frac12F_y(0)z^2+o(z^2) \qquad
\text{as }z\downarrow0.
\]
Substituting into the centroid identity gives
\[
y_c(z)
=
-\frac{1}{2M}F_y(0)z^2+o(z^2).
\]
Therefore
\[
y_c(z)
=
\frac{\pi^2x_0}{2k_0^2a^4}z^2
+
o(z^2)
\qquad
\text{as }z\downarrow0.
\]
In the special case \(x_0=a\), this becomes
\[
y_c(z)
=
\frac{\pi^2}{2k_0^2a^3}z^2
+
o(z^2).
\]
Thus Theorem~\ref{thm:yc_nonlinear} recovers the short-distance quadratic bending law.

\bibliographystyle{siamplain}
\bibliography{references}
\end{document}